\theoremstyle{plain}
\newtheorem{theorem}{\textbf{Theorem}}
\newtheorem{proposition}{\textbf{Proposition}}
\newtheorem{lemma}{\textbf{Lemma}}
\newtheorem{corollary}{\textbf{Corollary}}
\theoremstyle{definition}
\newtheorem{definition}{Definition}
\newtheorem{example}{\textbf{Example}}
\newtheorem{remark}{\textbf{Remark}}
\newtheorem{assumption}{\textbf{Assumption}}
\newcommand{\dsep}{\textrm{d-sep}}
\newcommand{\mean}{\mathbb{E}}
\newcommand{\pa}{\mathcal{P}}
\newcommand{\ch}{\mathcal{C}}
\newcommand\indep{\protect\mathpalette{\protect\independenT}{\perp}}
\def\independenT#1#2{\mathrel{\rlap{$#1#2$}\mkern3mu{#1#2}}}
\title{\LARGE \bf
Causal Structure Identification from Corrupt Data-Streams 
}
\author{Venkat Ram Subramanian, Andrew Lamperski, and Murti V. Salapaka
\thanks{
The authors are with the Department of Electrical and Computer
   Engineering, University of Minnesota, Minneapolis, MN 55455, USA.{\tt\small subra148@umn.edu, alampers@umn.edu,  murtis@umn.edu}}
   \thanks{Work supported in part by NSF CMMI 1727096.}
}
\begin{document}
\maketitle
\begin{abstract}
Complex networked systems can be modeled and represented
as graphs, with nodes representing the agents and the links describing
the dynamic coupling between them. The fundamental objective 
of network identification for dynamic systems is to identify causal
influence pathways. However, 
dynamically related data-streams that originate from different sources
are prone to corruption caused by asynchronous time-stamps, packet
drops, and noise. In this article, we show that identifying causal
structure using corrupt measurements results in the inference of spurious
links. A necessary and sufficient condition that delineates the
effects of corruption on a set of nodes is obtained. Our theory
applies to nonlinear systems, and systems with feedback loops. 
Our results are obtained by the analysis of conditional directed information in
dynamic Bayesian networks.
We provide
consistency results for the conditional directed information
estimator that we use by showing almost-sure
convergence.
\end{abstract} 
\section{Introduction}
Models of systems as networks of interacting systems are central to
many domains such as climate science \cite{kretschmer2016climate}, geoscience \cite{sendrowski2018transfer}, biological systems \cite{omranian2016gene}\cite{bassett2017network}, quantitative finance \cite{fiedor2014networks}, social sciences \cite{borgatti2009network}, and in many engineered systems like the Internet of Things \cite{zhu2015green} and wireless sensor networks \cite{yang2016practical}.
In many scenarios such as the power grid \cite{deka2018structure} and metabolic pathways in cells \cite{finkle2018windowed} it is impractical or impermissible to externally influence the system. Here causal structure identification via passive means is to be accomplished. With advancements in  measurement technology, data processing and communication systems coupled with sensors and measurement devices becoming inexpensive, passive identification of causal graphs of dynamically related agents is becoming more tenable.

Often, the data-streams in such large systems are not immune to effects of noise \cite{stankovic2018distributed}, asynchronous sensor clocks \cite{cho2014survey} and packet drops \cite{leong17sensor}. When dealing with problems of identifying structural and functional connectivity of a large network, there is a pressing need to rigorously study such uncertainties and address detrimental effects of corrupt data-streams on network reconstruction. 
\subsection{Related Work}
Network identification for linear systems is extensively studied. Methods for identifying transfer functions that dynamically link nodes from time-series data are provided in \cite{weerts2018identifiability, hendrickx2018id}, and \cite{materassi2019signal}. However, these works assume that the
time-series are perfect.

Authors in \cite{MatSal12} leveraged  multivariate Wiener filters to reconstruct the undirected topology of the
generative network model.
 Moreover, assuming that the interaction dynamics are {\it strictly causal} and using multivariate estimation based on a Granger filter, it was shown that the interaction structure can be accurately recovered with directions, and without any spurious links.
Here too, results assume data to be uncorrupted with the interaction between agents governed via Linear time-invariant (LTI) dynamics.

For a network of interacting agents with nonlinear, dynamic
dependencies and  strictly causal interactions, the authors in
\cite{quinn15DIG} proposed the use of directed information to
determine the directed structure of the network. Sufficient conditions
to recover the directed structure are provided. More recently,
\cite{sinha2017identifying},\cite{sinha2019information} defined and used \textit{information transfer} to
determine underlying causal interactions in a power network. Here too
it is assumed that the data-streams are ideal with no distortions. 

The authors in \cite{yuan2011robust},\cite{chetty2013robust} use dynamical structure functions (DSF) for network recosntruction \cite{goncalves2008necessary} and consider measurement noise and non-linearities in the network dynamics. The proposed method first finds optimal DSF for all possible Boolean structures and then adopt a model selection procedure to determine the best estimate.
The authors concluded that the performance of their algorithms degrades as noise, network size and non-linearities increase. However, a precise characterization of drawing spurious inferences in structure is not provided. In this article, we provide exact location of spurious links that arise during network reconstruction from corrupt data-streams.

Inspite its significance, little is known on the effects of
uncertainties in the data-streams on network idetnification. Recently in \cite{runge2018causal}, the issues of observation noise and undersampling on causal discovery from time-series data has been addressed. Although authors concluded that spurious links can be inferred, a rigorous characterization of such links was not proven nor a generalization of corruption models was provided. In
\cite{SLS17network} focusing on networks with linear time-invariant (LTI) 
interactions, authors provided characterization of the extent of spurious
links that can appear due to data-corruption. 
However, the analysis is restricted to
LTI systems. Moreover, in \cite{SLS17network} the objective is to determine the topology of the networked system and not to deduce the directions. 
\subsection{Our Contribution}
In this article, we focus our study to determine the directed structure of a network thereby informing the causal structure of the network, using non-invasive means from corrupt data-streams. We consider networks admitting non-linear and strictly causal dynamical interactions. 

We provide necessary and sufficient conditions to determine the directed network structure from corrupt data-streams. We present tight characterization for the spurious links that arise due to corruption of data-streams by determining their location and orientation.

In \cite{SLS18inferring}, preliminary results that characterized the spurious links, in the framework of this article are provided. However, the analysis was limited to dynamical interactions such that every node was dependent dynamically on the entire history (strict) of its \textit{parent} nodes. In this article, we consider general class of non-linear systems by relaxing the above assumption on dynamics. Moreover, we provide detailed and rigorous proofs to genralize the results obtained in \cite{SLS18inferring} wherein only a proof sketch was provided. In addition, we establish convergence results for the estimator that we use to determine conditional directed information.  
 \subsection{Paper Organization}  
We review needed graph theory notions and describe the framework for generative models in Section ~\ref{sec:prelim}. In Section ~\ref{sec:NW pert}, we provide models to characterize corruption of data streams that captures time uncertainty, packet loss and measurement noise. The methods to infer directed network structure for non-linear dynamical systems are described in Section ~\ref{sec:causalID}. Our directed information estimator and simulation results are described in Section ~\ref{sec:DIest}. Finally, a conclusion is provided in Section ~\ref{sec:conclude}.
\section{Preliminaries}\label{sec:prelim}
\subsection{Notations}
\noindent
$y[\cdot ]$ denotes a sequence and $y^{(t)}$ denotes the sequence $y[0],y[1],\dots y[t]$.\\
$P_X$ represents the probability density function of a random variable $X$.\\
$X\indep Y$ denotes that the random variables $X$ and $Y$ are independent.\\
$\mathbb{E}[\cdot ]$ denotes the expectation operator.
\subsection{Graph Theory Background}
In this subsection, few terminologies from graph theory that will be extensively used for network structure inference are reviewed. For furhter reference, see \cite{koller2009prob}.
\begin{definition}[Directed Graph]
\label{def:Graphs}
A \emph{directed graph} $G$ is a pair $(V,A)$ where $V $ is a
set of vertices or nodes and $A$ is a set of edges given by ordered pairs $(i,j)$
where $i,j\in V$. If $(i,j) \in A$, then we say that there is an edge
from $i$ to $j$.
\end{definition}
We shall use $i\to j$ indicates an arc or edge or link from node $i$ to node $j$ in a directed graph. $i - j$ denotes one of $i \to j$ or $j \to i $
\begin{definition}[Children and Parents]
\label{def:ChP}
Given a directed graph ${G}=(V,A)$ and a node $j\in V$, the \emph{children} of $j$ are defined as $\ch (j):=\left\lbrace i|j\to i \in A\right\rbrace $ and the \emph{parents} of $j$ as $\pa (j):=\left\lbrace i|i\to j \in A\right\rbrace $.
\end{definition}
\begin{definition}[Trail/Path]
Nodes $v_1,v_2,\dots ,v_k \in V$ forms a \emph{trail} or a \emph{path} in a directed graph, $G$, if for every $i=1,2,\dots ,k-1$ we have $v_i- v_{i+1}$. 
\end{definition}
\begin{definition}[Chain]
In a directed graph $G$, a \emph{chain} from node $v_i$ to node $v_j$ comprises of a sequence of $k$ nodes such that $v_i\to w_1 \to \dots \to w_{k-2}\to v_j$ holds in $G$.
\end{definition}
\begin{definition}[Descendants and Ancestors]
Suppose there exists a chain from a node $v_j$ to $v_k$ in a directed graph, $G$. Then, $v_k$ is called a \emph{descendant} of node $v_j$ and $v_j$ is called an \emph{ancestor} of $v_k$.
\end{definition}
\begin{definition}[Fork]
A node $v_k$ is a \emph{fork} in a directed graph $G$, if there are two other nodes $v_i,v_j$ such that $v_i\gets v_k \to v_j$ holds. 
\end{definition}
\begin{definition}[Collider]
A node $v_k$ is a \emph{collider} in a directed graph, $G$, if there are two other nodes $v_i,v_j$ such that $v_i\to v_k \gets v_j$ holds. 
\end{definition}
\begin{definition}[Active Trail]
In a directed graph $G$, a trail $v_1- v_2 - \dots - v_n$ is \emph{active} given a set of nodes $Z$ if one of the following statements holds for every triple
$v_{m-1} - v_{m} - v_{m+1}$ along the trail:
\begin{enumerate}[a)]
\item If $v_m$ is not a collider, then $v_m \notin Z$.
\item If $v_m$ is a collider, then $v_m$ or one of its descendants
  is in $Z$.
\end{enumerate} where $m\in \{2,\dots ,n-1\}$.
\end{definition}
See Figure ~\ref{fig:active} for an illustration. 
\begin{figure}[t]
  \centering
  \begin{subfigure}{0.9\columnwidth}
    \centering
 \begin{tikzpicture}[scale=0.35]
                 \tikzstyle{Zuvertex}=[circle,fill=gray,minimum size=12pt,inner sep=0pt,thick,draw]
\tikzstyle{vertex}=[circle,fill=none,minimum size=12pt,inner sep=0pt,thick,draw]        
          \node[vertex] (n1) {$1$};
          
          \node[Zuvertex, right of=n1] (n2) {$2$};
          \node[Zuvertex,right of=n2] (n3) {$3$};
          \node[vertex,right of=n3] (n4) {$4$};          
          \draw[->,thick] (n2)--(n1);
          \draw[->,thick] (n2)--(n3);
                    \draw[->,thick] (n3)--(n4);    
       \end{tikzpicture}   
        \subcaption{\label{fig:actNonCol} Trail connecting 1 and 4 is active given $Z=\{ \}$. }
  \end{subfigure}
    \begin{subfigure}{0.9\columnwidth}
    \centering
        \begin{tikzpicture}[scale=0.35]
                 \tikzstyle{Zuvertex}=[circle,fill=gray,minimum size=12pt,inner sep=0pt,thick,draw]
\tikzstyle{vertex}=[circle,fill=none,minimum size=12pt,inner sep=0pt,thick,draw]        
          \node[vertex] (n1) {$1$};
          
          \node[vertex, right of=n1] (n2) {$2$};
          \node[Zuvertex,right of=n2] (n3) {$3$};
          \node[vertex,right of=n3] (n4) {$4$};          
          \draw[->,thick] (n1)--(n2);
          \draw[->,thick] (n3)--(n2);
                    \draw[->,thick] (n3)--(n4); 
                            \end{tikzpicture}
        \subcaption{\label{fig:actCol}
        Trail connecting 1 and 4 is active given $Z=\{2\}$.
        }
  \end{subfigure}
         \caption{
    \label{fig:active} This figure shows when the trail connecting nodes 1 and 4 is active given $Z$.   
  }
\end{figure}
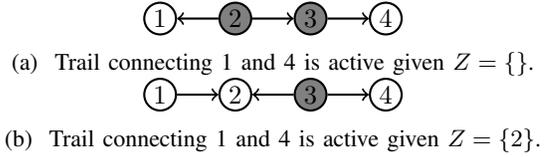
\begin{definition}[d-separation]
Let $X,Y$ and $Z$ be a set of nodes in a directed graph, $G$. In $G$, $X$ and $Y$ are \emph{d-separated} by $Z$ if and only if there is no active trail between any $x\in X$ and any $y\in Y$ given $Z$. It is denoted as $\dsep\left(X,Y\mid Z \right)$.
\end{definition}
\begin{definition}[Directed Cycle]
A \emph{directed cycle} from a node $v_i$ to  $v_i$ in a directed graph, $G$, has the form $v_i\to w_1 \to \dots \to w_k \to v_i$ for some set of nodes $\{w_n\}_{n=1}^{k}$ in $G$.
\end{definition}
\begin{definition}[Directed Acyclic Graph]
A directed graph with no directed cycles is called a \emph{directed acyclic graph} (DAG).
\end{definition}
\begin{definition}[Bayesian Network]
Suppose $G=(V,A)$ is a DAG whose $N$ nodes represent random variables $a_1,\dots,a_N$. $G$ is  called a {\it Bayesian Network}(BN) if for any three subsets $X,\ Y$ and $Z$ of $V$, 
$\mbox{d-sep}(X,Y\mid Z)$ implies $X$ is independent of $Y$ given $Z.$
\end{definition}
\begin{definition}[Faithful Bayesian network]
Suppose $G=(V,A)$ is a DAG whose $N$ nodes represent random variables $a_1,\dots,a_N$. $G$ is called a \emph{Faithful Bayesian network} if for any three subsets $X,\ Y$ and $Z$ of $V$, it  holds that $X$ and $Y$ are independent given $Z$, if and only if  $\mbox{d-sep}(X,Y\mid Z)$ is true.  
\end{definition}
\subsection{Generative Model}\label{sec:genmodel}
In this subsection, the \textit{generative model} that is assumed to generate the
measured data is described. Consider $N$ agents that interact over a
network. For each agent $i$, we associate a discrete time sequence $y_i[\cdot]$
and a sequence $e_i[\cdot].$ We assume $e_i$ and $y_i$ to be random processes. The process $e_i[\cdot]$ is considered
innate to agent $i$ and thus $e_i$ is independent of $e_j$ if $i\not=
j.$ Moreover, $e_i$ is considered to be uncorrelated across time. Let $Y$ denote the set of all random process $\{y_1,\ldots, y_N\}$ with a parent set $\mathcal{P}' (i)$ defined for $i=1,\ldots,N.$  
We consider strictly causal non-linear dynamical relations. Here, $i $ can belong to the parent set $\mathcal{P}'(i)$. The generative model takes the form:
\begin{equation}\label{eq:genmodel}
y_i[t]=f_i\left(y_i^{(t-1)},\underset{j\in \mathcal{P}' (i)}{\bigcup} y_j^{(t-1)},e_i[t]\right),
\end{equation}
where $f_i$'s can be any finite valued non-linear function such that $|f_i|< \infty$.  

For an illustration, consider the dynamics of a generative model described by:
\begin{equation}\label{eq:eggenmodel}
\begin{aligned}
y_1[t]&=y_1[t-1]y_1[t-2]+e_1[t],\\
y_2[t]&=\sin (y_1[t-1]\cdot y_2[t-1]+e_2[t]),\\
y_3[t]&=(y_1[t-1]+y_3[t-1])\cdot e_3[t],\\
y_4[t]&=y_2[t-1]^2+y_3[t-2]+y_4[t-1]+e_4[t],\\
y_5[t]&=y_5[t-1]\cdot y_4[t-1]+e_5[t].
\end{aligned}
\end{equation}
We remark that if $y_j$ appears on the right hand side of ~\eqref{eq:genmodel} for any time instant $t$, then $j\in \mathcal{P}'(i)$; the parent set is thus not dependent on time. 

\begin{figure}
\centering
\subcaptionbox{
   \label{fig:gengraph} 
    Generative Graph $G$} 
 {\begin{subfigure}{0.49\columnwidth}
\centering
 \begin{tikzpicture}[scale=0.65]
 \tikzstyle{vertex}=[circle,fill=none,minimum size=10pt,inner sep=0pt,thick,draw]
  \node[vertex] (n1) {$1$};

  \node[vertex] (n2) at ($(n1) - (2em,3em)$) {$2$};%

        \node[vertex] (n3) at ($(n1) - (-2em,3em)$) {$3$};%

    \node[vertex] at ($(n1)-(0,6em)$) (n4) {$4$};%
	\node[vertex] at ($(n4)-(0,3.5em)$) (n5){$5$};
  \draw[->,thick] (n1)--(n2);
  \draw[->,thick] (n1)--(n3);
  \draw[->,thick] (n2)--(n4);
  \draw[->,thick] (n3)--(n4);
  \draw[->,thick] (n4)--(n5);
\end{tikzpicture}
     \end{subfigure}}
\subcaptionbox{
   \label{fig:dbn} 
    DBN $G'$ for 3 time slices}
     {\begin{subfigure}{0.49\columnwidth}
  \centering
 \begin{tikzpicture}[scale=0.8]
 \tikzstyle{vertex}=[fill=none,minimum size=11.5pt,inner sep=0pt,thin,draw]
 \tikzstyle{pvertex}=[star,star points=10,fill=white,minimum size=10pt,inner sep=0pt,thick,draw]
  \node[vertex] (n1) {$y_1[0]$};

  \node[vertex] (n2) at ($(n1)-(0,3.75em)$) {$y_2[0]$};%

        \node[vertex] (n3) at ($(n2)-(0,3.75em)$) {$y_3[0]$};%

    \node[vertex] at ($(n3)-(0,3.75em)$) (n4) {$y_4[0]$};%
	\node[vertex] at ($(n4)-(0,3.75em)$) (n5){$y_5[0]$};
	
	  \node[vertex] (n6) at ($(n1)-(-3.75em,0)$) {$y_1[1]$};

  \node[vertex] (n7) at ($(n6)-(0,3.75em)$) {$y_2[1]$};%

        \node[vertex] (n8) at ($(n7)-(0,3.75em)$) {$y_3[1]$};%

    \node[vertex] at ($(n8)-(0,3.75em)$) (n9) {$y_4[1]$};%
	\node[vertex] at ($(n9)-(0,3.75em)$) (n10){$y_5[1]$};
	\node[vertex] (n11) at ($(n6)-(-3.75em,0)$) {$y_1[2]$};

  \node[vertex] (n12) at ($(n11)-(0,3.75em)$) {$y_2[2]$};%

        \node[vertex] (n13) at ($(n12)-(0,3.75em)$) {$y_3[2]$};%

    \node[vertex] at ($(n13)-(0,3.75em)$) (n14) {$y_4[2]$};%
	\node[vertex] at ($(n14)-(0,3.75em)$) (n15){$y_5[2]$};
	
  \draw[->,thick] (n1)--(n6);
  \draw[->,thick] (n2)--(n7);
  \draw[->,thick] (n3)--(n8);
  \draw[->,thick] (n4)--(n9);
  \draw[->,thick] (n5)--(n10);
  
  \draw[->,thick] (n1)--(n6);
  \draw[->,thick] (n1)--(n7);
  \draw[->,thick] (n1)--(n8);
  \draw[->,thick] (n2)--(n9);
  \draw[->,thick] (n4)--(n10);
  
  \draw[->,thick] (n6)--(n11);
  \draw[->,thick] (n7)--(n12);
  \draw[->,thick] (n8)--(n13);
  \draw[->,thick] (n9)--(n14);
  \draw[->,thick] (n10)--(n15);
  
  \draw[->,thick] (n6)--(n11);
  \draw[->,thick] (n6)--(n12);
  \draw[->,thick] (n6)--(n13);
  \draw[->,thick] (n7)--(n14);
  \draw[->,thick] (n9)--(n15);
  
  \draw[->,thick] (n1) to[out=35,in=145] (n11);
  \draw[->,thick] (n3)--(n14);
  
  
\end{tikzpicture}
     \end{subfigure}}
    \caption{\label{fig:gendbn}This figure shows \ref{fig:gengraph} generative graph, \ref{fig:dbn} its associated DBN for 3 time slices. }
    \end{figure}
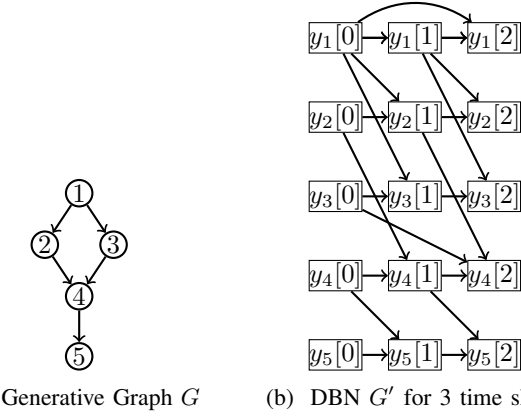
\subsection{Graphical Representation}
Here we describe how networks of dynamical systems are represented by graphs. 
\subsubsection*{Generative Graph}\label{sec:gengraph}
The structural description of \eqref{eq:genmodel} induces a {\it generative graph} $G=(V,A)$ formed by identifying each vertex $v_i$ in $V$ with random process $y_i$ and the set of directed links, $A,$ obtained by introducing a directed link from every element in the parent set $\mathcal{P}'(i)$ of agent $i$ to $i.$ Note that we do not show $i\to i$ in the generative graph and neither do we show the processes $e_i$.

The generative graph associated with the example described in ~\eqref{eq:eggenmodel} is given by Fig. \ref{fig:gendbn}(a). Note that the generative graph describes the relationships between the stochastic processes $y_i$. When the time variable is unraveled we obtain the Dynamic Bayesian Network as defined below.
\subsubsection*{Dynamic Bayesian Network (DBN)}\label{sec:dbn}
Let $G=(V,A)$ be a generative graph. Let $y_i$ be as defined in \eqref{eq:genmodel} for all $ i \in V$. Suppose all discrete time sequences have a finite horizon assumed to be $T$. Let $S_{ij}[t]=\{t':y_j[t'] \in y_j^{(t-1)} \textrm{as an argument of } f_i \textrm{ in expression of } y_i[t] \textrm{ in ~\eqref{eq:genmodel}}\}$ for all $j\in \mathcal{P}'(i)\cup \{i\}$ and for all $t$. Consider the graph $G'=(V',A')$ where $V'= \left( \underset{t\in \{0,1,\dots T\}}{\underset{i\in V}{\bigcup}} y_i[t] \right)$ and 

\noindent$
A'= \underset{t\in \{0,1,\dots T\}}{\underset{i\in V}{\bigcup}} \left( \underset{j\in \mathcal{P}'(i)\cup \{i\}}{\bigcup}\left( \underset{k\in S_{ij}[t]}{\bigcup} y_j[k]\to y_i[t]\right)\right)$

\noindent The joint distribution of $Y^{(T)}$ is given by:
\begin{equation}\label{eq:jdbn}
P_{Y^{(T)}}=P_{y_1[0]}\dots P_{y_N[0]}\prod _{t=1}^T\prod _{i=1}^NP_{y_i[t]\mid \pa(y_i[t])},
\end{equation}
where the parents of $y_i[t]$ are obtained from $G'.$  It can be shown that  $G'$ is the Bayesian network for the random variables $\{y_i[t]: t=0,1,2,\dots , T,\ i=1,2,\dots ,N\}$ and is  considered the \emph{Dynamic Bayesian Network} for $\{y_i: i=1,2,\dots ,N \}$(see \cite{koller2009prob}).
Figure 2(b) represents the DBN for the system in \eqref{eq:eggenmodel} for three time steps.
\section{Uncertainty Description }\label{sec:NW pert}
In this section we provide a description for how uncertainty affects the time-series $y_i.$ We interchangeably use corruption or perturbation to denote uncertainties in daata-streams.
\subsection{General Perturbation Models}\label{sec:Models}
Consider $i^{th}$ node in a generative graph and it's associated unperturbed time-series $y_i$. The corrupt data-stream $u_i$ associated with $i$ follows:
\begin{equation}\label{eq:corruptionModel}
u_i[t]=g_i(y_i^{(t)},u_i^{(t-1)},\zeta_i[t ]),
\end{equation}
where  $u_i$ can depend dynamically on  $y_i$ till time $t$, its own values in the strict past,  and $\zeta_i[t]$ which represents
a stochastic process that is independent across time. 
We highlight a few important perturbation models that are practically relevant. See \cite{SLS18inferring} for more details.
\subsubsection*{Temporal Uncertainty}
Consider a node $i$ in a generative graph. Suppose $t$ is the true clock index but the node $i$ measures a noisy clock index which is given by a random process, $\zeta _i[t]$. One such probabilistic model is given by the following IID Bernoulli process: 
\begin{equation*}
\zeta _i[t]=\begin{cases}
d_1, & \textrm{ with probability } p_i \\
d_2, & \textrm{ with probability } (1-p_i),
\end{cases}
\end{equation*}
where $d_1$ and $d_2$ are any non-positive integers such that at least one of $d_1$ and $d_2$ are not equal to $0$. 
Randomized delays in information transmission can be modeled as a convolution operation with the impulse function $\delta [t]$ shifted by $\zeta _i[t]$ as follows :
\begin{equation}
\label{eq:randDelayMdl}
u_i[t] =\delta [t+\zeta _i[t]]*y_i[t],
\end{equation}
where,
\begin{equation*}
\delta [t]=\begin{cases}
1, & t=0 \\
0, & t\neq 0.
\end{cases}
\end{equation*}
\subsubsection*{Noisy Filtering}
Given a node $i$ in a generative graph, the data-stream $y_i$ is
causally filtered and corrupted
with independent measurement noise $\zeta _i[\cdot ]$. This perturbation model is described by:
\begin{equation}\label{eq:noise}
u_i[t] = (L_i * y_i)[t] + \zeta _i[t],
\end{equation}
where $L_i$ is a stable causal linear time invariant filter. 
\subsubsection*{Packet Drops}
The measurement $u_i[t]$ corresponding to an ideal data-point $y_i[t]$ packet reception at time $t$ can be stochastically modeled as: 
\begin{equation}
  \label{eq:packetDrop}	
u_i[t]=\begin{cases}
y_i[t], & \textrm{ with probability } p_i\\
u_i[t-1], & \textrm{ with probability } (1-p_i).
\end{cases}
\end{equation}
Consider an IID Bernoulli process $\zeta_i$ described by,
\begin{equation*}
\zeta _i[t]=\begin{cases}
1, & \textrm{ with probability } p_i \\
0, & \textrm{ with probability } (1-p_i).
\end{cases}
\end{equation*}
The corruption model in \eqref{eq:corruptionModel} takes the form: 
\begin{equation}
u_i[t]=\zeta _i[t]y_i[t]+(1-\zeta _i[t])u_i[t-1].
\end{equation}
\begin{figure}
\centering
\begin{minipage}{0.9\columnwidth}
  \centering
 \begin{tikzpicture}[scale=0.75]
 \tikzstyle{vertex}=[fill=none,minimum size=10pt,inner sep=0pt,thin,draw]
  \tikzstyle{Zuvertex}=[fill=gray,minimum size=10pt,inner sep=0pt,thin,draw]
  \node[Zuvertex] (n1) {$y_1[0]$};
  \node[vertex] (n16) at ($(n1)-(0,-3.75em)$) {$u_1[0]$};
    \node[vertex] (n17) at ($(n16)-(-3.75em,0)$) {$u_1[1]$};
      \node[vertex] (n18) at ($(n17)-(-3.75em,0)$) {$u_1[2]$}; 
  \node[vertex] (n2) at ($(n1)-(0,3.75em)$) {$y_2[0]$};%

        \node[vertex] (n3) at ($(n2)-(0,3.75em)$) {$y_3[0]$};%

    \node[vertex] at ($(n3)-(0,3.75em)$) (n4) {$y_4[0]$};%
	\node[vertex] at ($(n4)-(0,3.75em)$) (n5){$y_5[0]$};
	
	  \node[Zuvertex] (n6) at ($(n1)-(-3.75em,0)$) {$y_1[1]$};

  \node[vertex] (n7) at ($(n6)-(0,3.75em)$) {$y_2[1]$};%

        \node[vertex] (n8) at ($(n7)-(0,3.75em)$) {$y_3[1]$};%

    \node[vertex] at ($(n8)-(0,3.75em)$) (n9) {$y_4[1]$};%
	\node[vertex] at ($(n9)-(0,3.75em)$) (n10){$y_5[1]$};
	\node[Zuvertex] (n11) at ($(n6)-(-3.75em,0)$) {$y_1[2]$};

  \node[vertex] (n12) at ($(n11)-(0,3.75em)$) {$y_2[2]$};%

        \node[vertex] (n13) at ($(n12)-(0,3.75em)$) {$y_3[2]$};%

    \node[vertex] at ($(n13)-(0,3.75em)$) (n14) {$y_4[2]$};%
	\node[vertex] at ($(n14)-(0,3.75em)$) (n15){$y_5[2]$};
	
  \draw[->,thick] (n1)--(n6);
  \draw[->,thick] (n2)--(n7);
  \draw[->,thick] (n3)--(n8);
  \draw[->,thick] (n4)--(n9);
  \draw[->,thick] (n5)--(n10);

  \draw[->,red,dashed] (n1)--(n16);  
    \draw[->,red,dashed] (n6)--(n17);
      \draw[->,red,dashed] (n11)--(n18);
         \draw[->,red,dashed] (n1)--(n17);
               \draw[->,red,dashed] (n1)--(n18);
                     \draw[->,red,dashed] (n6)--(n18);   
  
  \draw[->,thick] (n1)--(n6);
  \draw[->,thick] (n1)--(n7);
  \draw[->,thick] (n1)--(n8);
  \draw[->,thick] (n2)--(n9);
  \draw[->,thick] (n4)--(n10);
  
  \draw[->,thick] (n6)--(n11);
  \draw[->,thick] (n7)--(n12);
  \draw[->,thick] (n8)--(n13);
  \draw[->,thick] (n9)--(n14);
  \draw[->,thick] (n10)--(n15);
  
  \draw[->,thick] (n6)--(n11);
  \draw[->,thick] (n6)--(n12);
  \draw[->,thick] (n6)--(n13);
  \draw[->,thick] (n7)--(n14);
  \draw[->,thick] (n9)--(n15);
  
  \draw[->,thick] (n1) to[out=35,in=145] (n11);
  \draw[->,thick] (n3)--(n14);

  
\end{tikzpicture}
 \caption{
   \label{fig:pdbn} 
    Perturbed DBN $G'_Z$ for 3 time slices when node $1$ is corrupt. Node 1 ideal stream denoted by $y_1$ is shaded because it is not observed or measured. Only, the its corrupted data-stream, $u_1$, is measured.}
    \end{minipage}
    \end{figure}
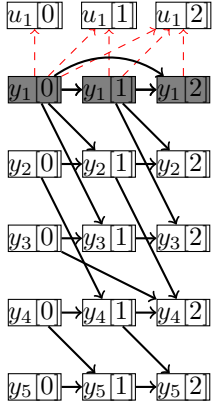    
\subsection{Perturbed Dynamic Bayesian Network}
Here, we provide a discussion on how the dynamic Bayesian network associated with the measured data-streams gets altered when the data-streams are subject to corruption. Consider a generative graph $G=(V,A)$. Let $y_i$ be as defined in \eqref{eq:genmodel} for all $ i \in V$. Suppose all discrete time sequences have a finite horizon assumed to be $T$. Let $G'=(V',A')$ be the associated dynamic Bayesian network. Suppose $Z \subset V$ is the set of perturbed nodes with perturbation model described in \eqref{eq:corruptionModel}. For $i \in Z$, the measured(corrupt) data-stream corresponding to agent $i$, $u_i$, is related to
$y_i$ via \eqref{eq:corruptionModel}. Let $U_Z=\{u_i\}_{i\in Z}$ and $Y_{\bar Z}=\{y_j\}_{j\in \bar Z}$ where $\bar Z=V\setminus Z$. Due to corruption only $U_Z$ and $Y_{\bar Z}$ are measured and observed. Denote the measured data-streams by $\mathcal{W} =U_Z\cup Y_{\bar Z}$. For all $j\in Z$ let $SU_{j}[t]=\{t':u_j[t'] \in u_j^{(t-1)}\textrm{ as an argument of } g_i  \textrm{ in expression of } u_j[t]\textrm{ in ~\eqref{eq:corruptionModel}}\}$ and let $SY_{j}[t]=\{t':y_j[t'] \in y_j^{(t)}\textrm{ as an argument of } g_i \textrm{ in expression of } u_j[t] \textrm{ in ~\eqref{eq:corruptionModel}}\}$ for all $t$. Consider the graph $G'_Z =(V'_Z,A'_Z)$ where  $V'_Z= V' \cup \left( \underset{t\in \{0,1,\dots T\}}{\underset{k\in Z}{\bigcup}} u_k[t] \right)$ and   $A' _Z= A' \cup \left( \underset{i\in SY_k[t]}{\underset{k\in Z}{\bigcup}}y_k[i]\to u_k[t]\right) \cup \left( \underset{i\in SU_k[t]}{\underset{k\in Z}{\bigcup}} u_k[i] \to u_k[t] \right)$ for all $t \in \{0,1,2,\dots , T\}$. Note that the vertex set $V'_Z$ consists of all measurements given by the set $\mathcal{W}$, and the uncorrupted versions $y_k$ of the corrupted versions $u_k$ for $k\in Z.$

Consider the set of random variables, $R=\{y_i[t]:i\in \{1,2,\dots , N\} \mbox{ and } t\in  \{0,1,2,3,\dots , T\}\}\cup \{u_i[t]:i\in \{1,2,\dots , N\} \mbox{ and } t\in \{0,1,2,3,\dots , T\}\} $. The joint distribution $P_R $ is given by:
\begin{multline}\label{eq:jpdbn}
P_{R}=\left(\prod _{i\in V}P_{u_i[0]}\right)\cdot \left(\prod _{j\in Z}P_{y_j[0]}\right) \cdot \left(\prod _{t=1}^T\prod _{i=1}^N P_{u_i[t]\mid \pa(u_i[t])}\right)
\\
\cdot \left(\prod _{t=1}^T\prod _{j=1}^NP_{y_j[t]\mid \pa(y_j[t])}\right),
\end{multline}
where the parents of $u_i[t],y_j[t]$ are obtained from $G'_Z.$ $G'_Z$ is the Bayesian Network for the random variables $R$ and is considered as the perturbed DBN (PDBN) associated with $U_Z\cup Y$.
 
Fig ~\ref{fig:pdbn}. shows an example of a perturbed DBN corresponding to the generative graph in Fig. ~\ref{fig:gendbn}(a) for three time slices when node 1 data-streams are corrupt following a noisy filtering model described in \eqref{eq:noise}.
\section{Structure Identification}\label{sec:causalID}
\subsection{Structure Inference from Ideal Data-Streams}
First, we recall how
the structure of a generative graph can be inferred using directed information in the case of ideal data-streams. 
Consider a generative graph $G$ with $N$ nodes and let $Y$ denote the collection of $N$ data-streams that are measured. The authors in \cite{quinn15DIG} defined and applied directed information (DI) in a network of of dynamically interacting agents, to determine if a process causally influences another. A slightly modified definition of  DI as defined in \cite{quinn15DIG} is: 
\begin{definition}[Directed Information]  
The directed information (DI) from data-stream $y_j$ to $y_i$ is given by:
\begin{equation}\label{eq:di}
I(y_j \to y_i \parallel Y_{\bar{i}\bar{j}})=\mean \left[ \log \frac{P_{y_i\parallel y_j,Y_{\bar{i}\bar{j}}}}{P_{y_i\parallel Y_{\bar{i}\bar{j}}}}\right],
\end{equation} 
where $P_{y_i\parallel y_j,Y_{\bar{i}\bar{j}}}= \overset{T}{\underset{t= 1}{\prod}} P_{y_i[t]\mid y_i^{(t-1)},y_j^{(t-1)},Y_{\bar{i}\bar{j}}^{(t-1)}}$, $P_{y_i\parallel Y_{\bar{i}\bar{j}}}= \overset{T}{\underset{t= 1}{\prod}} P_{y_i[t]\mid y_i^{(t-1)},Y_{\bar{i}\bar{j}}^{(t-1)}}$ and $Y_{\bar{i}\bar{j}}=Y\setminus \{y_i,y_j\}$.
\end{definition}
\noindent Note that DI is always non-negative. So, if there is no directed edge
from $j$ to $i$ in $G$, then we must have that $I(y_j\to
y_i \parallel Y_{\bar{i}\bar{j}}) = 0$.
 
The following theorem was proved in \cite{quinn15DIG} that specifies a
necessary and sufficient condition to detect a presence of link in the generative 
graph.
\begin{theorem}\label{thm:DIG}
  {\it
A directed edge from $j$ to $i$ exists in the directed graph $G$ if
and only if $I(y_j \to y_i \parallel Y_{\bar{i}\bar{j}})>0.$
}
\end{theorem}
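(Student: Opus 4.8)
The plan is to recast the directed information as a sum of conditional mutual informations and then read off the two implications from the d-separation structure of the dynamic Bayesian network $G'$ of the generative model. Expanding the ratio in \eqref{eq:di} over the $T$ product factors and taking expectations gives
\[
I(y_j \to y_i \parallel Y_{\bar i \bar j}) \;=\; \sum_{t=1}^{T} I\!\left(y_i[t]\,;\,y_j^{(t-1)}\;\middle|\;y_i^{(t-1)},\,Y_{\bar i \bar j}^{(t-1)}\right),
\]
and each summand is a conditional mutual information, hence non-negative, vanishing exactly when $y_i[t]$ is conditionally independent of $y_j^{(t-1)}$ given $y_i^{(t-1)},Y_{\bar i \bar j}^{(t-1)}$. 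Writing $C_t:=y_i^{(t-1)}\cup Y_{\bar i \bar j}^{(t-1)}$ for the set of all measurements at times $\le t-1$ other than those of process $j$, the theorem becomes: no arc $j\to i$ in $G$ if and only if $y_i[t]\indep y_j^{(t-1)}\mid C_t$ for every $t$.

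For the direction ``no arc $\Rightarrow I=0$'', assume $j\notin\mathcal{P}'(i)$; I would show $\dsep(\{y_i[t]\},y_j^{(t-1)}\mid C_t)$ in $G'$, so that the Bayesian-network property yields the desired conditional independence. Consider any trail from $y_i[t]$ to some $y_j[s]$, $s\le t-1$. If its first edge points into $y_i[t]$, the adjacent node is a parent of $y_i[t]$; since $j\notin\mathcal{P}'(i)$ this node belongs to $C_t$ and is a non-collider on the trail, which is therefore blocked. If the first edge points out of $y_i[t]$, then because strict causality makes every edge of $G'$ run forward in time, the time index along the trail strictly increases and then, since it must eventually reach $s<t$, strictly decreases; at the first ``peak'' node $v$ both incident trail-edges point into $v$, so $v$ is a collider, while $v$ and all its descendants occur at times $\ge\mathrm{time}(v)>t-1$ and hence lie outside $C_t$ — so the trail is blocked here as well. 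Summing over $t$ gives $I(y_j\to y_i\parallel Y_{\bar i \bar j})=0$.

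For the direction ``arc $\Rightarrow I>0$'', assume $j\to i\in A$, i.e.\ $y_j$ genuinely appears as an argument of $f_i$; then for $T$ large enough there exist $t^\star\le T$ and $k\in S_{ij}[t^\star]\subseteq\{0,\dots,t^\star-1\}$ with the arc $y_j[k]\to y_i[t^\star]$ present in $G'$. A trail consisting of this single edge contains no intermediate triple and is therefore active under any conditioning set, so $y_i[t^\star]$ and $y_j[k]$ are not d-separated by $C_{t^\star}$ (note $y_j[k]\notin C_{t^\star}$). Invoking that $G'$ is a \emph{faithful} Bayesian network, this non-d-separation forces that $y_i[t^\star]$ is not conditionally independent of $y_j[k]$ — hence not of $y_j^{(t^\star-1)}$ — given $C_{t^\star}$, so the $t^\star$-th summand above is strictly positive and so is $I(y_j\to y_i\parallel Y_{\bar i \bar j})$.

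The step I expect to be the real obstacle is the converse, ``arc $\Rightarrow I>0$'': the condition $j\in\mathcal{P}'(i)$ is only a syntactic statement about the functional form of $f_i$ and does not by itself create statistical dependence, so this direction genuinely rests on $G'$ being a \emph{faithful} Bayesian network — a standing hypothesis on the generative model; with only the Markov property one can conclude no more than ``$I>0\Rightarrow$ arc''. Two lesser points also need attention: one must verify that the node of $y_j$ witnessing dependence never sits in the conditioning set (here automatic, since it has time index $\le t^\star-1$ while $C_{t^\star}$ excludes process $j$), and one must take the horizon $T$ large enough that an arc of $G$ is actually realised as an edge of $G'$.
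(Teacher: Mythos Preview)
Your argument is sound. The paper itself does not prove Theorem~\ref{thm:DIG}; it is quoted from \cite{quinn15DIG}, so there is no in-paper proof to compare against line by line. Your decomposition of the directed information into a telescoping sum of conditional mutual informations and your d-separation analysis in the DBN $G'$ are exactly the right tools, and in fact your ``no arc $\Rightarrow I=0$'' argument is the unperturbed special case of the paper's own Lemma~\ref{lem:dsepFuture}: your peak-collider observation --- that any trail leaving $y_i[t]$ forward in time must contain a collider at time $>t-1$ whose descendants are all at times $>t-1$ --- is precisely the mechanism isolated there (in the more general perturbed DBN).

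On the converse you are right to flag faithfulness as the real hypothesis: a single-edge trail $y_j[k]\to y_i[t^\star]$ is trivially active given $C_{t^\star}$, and faithfulness then converts non-d-separation into genuine dependence. The paper's Remark~\ref{rem:positivity} records that \cite{quinn15DIG} obtains the same conclusion under a \emph{positivity} assumption on the joint distribution of $Y$ rather than faithfulness; later, in Assumption~\ref{assume:faithfulBN}, the paper itself adopts faithfulness (together with positivity) for the analogous converse in the perturbed setting (Corollary~\ref{cor:PGtoDI}). So your choice of hypothesis is consistent with how the paper handles its own converse results, even if the cited source phrases it differently. Your two caveats --- that the witnessing node $y_j[k]$ must not lie in $C_{t^\star}$ and that $T$ must be large enough for the arc to appear in $G'$ --- are both necessary and correctly handled.
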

\begin{remark}\label{rem:positivity}
In \cite{quinn15DIG}, the authors assume positive distribution for the random processes in $Y$. Under this assumption the result in Theorem \ref{thm:DIG} is both necessary and sufficient. 
\end{remark}
\subsection{Main Result: Inferring Directed Graphs from Corrupt Data-streams}
In this subsection, we will describe how data uncertainty
will lead to spurious probabilistic relationships between nodes that are not
connected in the original graph.

To present the main result in Theorem ~\ref{thm:PG}, some definitions are required. 
\begin{definition}[Perturbed Graph]
\label{def:pg}
Let ${G}=(V,A)$ be a generative graph. Suppose $Z \subset V$ is the
set of perturbed nodes with each perturbation model admitting a description provided in
\eqref{eq:corruptionModel}. The perturbed graph, $G_Z = (V,A_Z)$, is a directed
graph where there is an edge $i\to j \in A_Z$ if and only if there
is a trail, $trl_G: i=v_1 - v_2 - \cdots - v_{k-1} - v_k = j$ in $G$ such that
the following conditions hold:
\begin{enumerate}[P1)]
\item If $j\notin Z$, then $v_{k-1} \to j \in A$.
  \label{case:end}
\item For $m\in \{2,3,\ldots ,k-1\}$, if $v_{m-1} \to v_{m} \leftarrow v_{m+1}$, and $v_m \notin
  Z$, then $v_{m+1} \in Z$.
  \label{case:collider}
\item If $v_m$ is a node such that $v_{m-1}-v_m-v_{m+1}$ is a sub-path of the path $v_1-\ldots-v_k$ and $v_m$ is not a collider, then $v_m\in Z.$ 
  \label{case:notCollider}
\end{enumerate}
\end{definition}  
\begin{remark}
\label{rem:otherTrails}
  Note that the existence of a trail that does not meet the `if' conditions in P\ref{case:end}), P\ref{case:collider}) and P\ref{case:notCollider}) 
  guarantees that $i\to j \in A_Z$.  
  For example, if $i\to j \in A$
  then $i\to j \in A_Z$. Indeed, if $j\notin Z$ then $i\to j\in A_Z$ by
  condition P\ref{case:end}).Conditions P\ref{case:collider}) and P\ref{case:notCollider}) are not applicable. On the other
  hand, if $j\in Z$, then none of the conditions P\ref{case:end}), P\ref{case:collider}) or P\ref{case:notCollider}) are applicable to the trail $i\rightarrow j$. So, $i\to j\in A_Z$.
 \end{remark}
\begin{definition}[Spurious Links]
\label{def:spuriouslinks}
Let $G=(V,A)$ be a generative graph, $Z \subset V$ be the set of perturbed nodes and $G_Z=(V,A_Z)$ be the perturbed graph. Spurious links are those links $i\to j \in A_Z$ that do not belong to $A$.
\end{definition}
The following lemma will be used to prove our main result in Theorem ~\ref{thm:PG}. 
\begin{lemma}\label{lem:dsepFuture}
{\it 
Consider a generative graph, $G=(V,A),$ consisting of $N$ nodes. Let
$Z=\{v_1,\dots , v_n\}\subset V$ be the set of $n$ perturbed nodes
where each perturbation is described by
\eqref{eq:corruptionModel}. Denote the data-streams as follows: $U_Z:=\{u_i\}_{i\in Z}$ and $Y_{\bar Z}:=\{y_j\}_{j\in \bar Z}$ where $\bar Z=V\setminus Z$. Let the measured data-streams be $\mathcal{W} =U_Z\cup Y_{\bar Z}=\{w _1, w _2, \dots , w _N\}$. Let $G'
=(V',A')$ be the dynamic Bayesian network (DBN) associated with $G$
and $G'_Z =(V'_Z,A'_Z)$ be the perturbed DBN. If $i\to j \notin A$ and
if a trail in $G'_Z$ between $w _i^{(t-1)}$ and $w _j[t]$ contains a
node $\alpha _{b_m}[t_m]$ such that $t_m\geq t$ and $b_m\in V$, then
for all $t>0$, the trail is not active given 
$\{w _j^{(t-1)},\mathcal{W} _{\bar j\bar i}^{(t-1)} \}$. 
}
\end{lemma}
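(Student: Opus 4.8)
The plan is to argue about the structure of any trail in $G'_Z$ that connects $w_i^{(t-1)}$ to $w_j[t]$ and that passes through a ``future'' node $\alpha_{b_m}[t_m]$ with $t_m \geq t$. The key observation is a \emph{temporal monotonicity} property of the perturbed DBN: every directed edge in $A'_Z$ goes from a variable at time $k$ to a variable at time $t$ with $k < t$ if the edge is between distinct $y$-nodes or $u$-nodes governed by \eqref{eq:genmodel}, from $y_k[k']$ to $u_k[t]$ with $k' \le t$ by the definition of $SY_k[t]$, and from $u_k[k']$ to $u_k[t]$ with $k' < t$ by $SU_k[t]$. Thus the only edges that are ``non-strictly'' increasing in time are the corruption edges $y_k[t]\to u_k[t]$. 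First I would record this as a structural fact about $G'_Z$: apart from edges of the form $y_k[t]\to u_k[t]$, time strictly decreases along a reversed edge, i.e.\ every parent of a node at time $t$ lives at time $<t$ or is the contemporaneous $y_k[t]$ feeding $u_k[t]$.

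Next I would trace the trail from the endpoint $w_j[t]$ inward. Since we are conditioning on $\{w_j^{(t-1)}, \mathcal{W}_{\bar j\bar i}^{(t-1)}\}$ — a set consisting only of \emph{past} measured variables (times $\le t-1$) and nothing at time $t$ — I would look at the first node $v$ on the trail, starting from $w_j[t]$, that sits at a time index $\ge t$ along a maximal initial ``future segment'' (such a node exists: $w_j[t]$ itself is at time $t$, and by hypothesis the trail reaches the future node $\alpha_{b_m}[t_m]$, so there is a whole sub-portion of the trail living at times $\ge t$). On this future segment, using the temporal monotonicity fact, the edges can only be of two types: strictly-decreasing-in-time edges, and contemporaneous $y_k[t]\to u_k[t]$ corruption edges, and all nodes on the segment are at time exactly $t$ except possibly transient excursions — here I would argue that in fact within a connected all-times-$\ge t$ portion the indices are forced to be exactly $t$, because to go above $t$ and come back down you would need an edge that increases time by more than a corruption edge allows. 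The upshot: the future segment lives entirely at time $t$ and consists of chains of contemporaneous corruption edges $y_k[t]\to u_k[t]$ glued at forks/colliders. Such a glued structure at a single time slice has a very limited shape: a $y_k[t]$ can be the tail of $y_k[t]\to u_k[t]$, and that is the only same-time edge it participates in, so the internal nodes of this segment are colliders $y_k[t]\to u_k[t]\leftarrow y_k'[t]$ — impossible since $u_k[t]$ has a unique same-time parent $y_k[t]$ — or the segment degenerates. Pushing this casework, the future segment must be a single edge $y_{b}[t]\to u_{b}[t]$ with $u_b[t]=w_b[t]$ an endpoint, forcing $b=j$ and $u_j[t]=w_j[t]$; then the node adjacent on the trail is $y_j[t]$, which is a \emph{collider} candidate or a pass-through.

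For the blocking argument I would then split on the status of the junction where the future segment meets the rest of the trail (which now descends to times $\le t-1$). If that junction node is $y_j[t]$ reached as $y_j[t]\to u_j[t]$, then along the trail $y_j[t]$ is either a non-collider with an outgoing edge — but $y_j[t]$'s only other neighbors are at time $<t$ via $y_j[t]\to (\text{children})$, i.e.\ it is a fork, hence a non-collider, and $y_j[t]\notin Z'$-type conditioning set since the conditioning set contains no time-$t$ variable, so the trail is \emph{blocked at} $y_j[t]$ by the non-collider rule. The remaining case is that the only future node is $w_j[t]$ itself with no $y_j[t]$ appearing, i.e.\ the trail leaves $w_j[t]$ directly to the past; but then it contains no node at time $\ge t$ other than the endpoint, contradicting the hypothesis that it contains $\alpha_{b_m}[t_m]$ with $t_m\ge t$ in its \emph{interior} — unless $\alpha_{b_m}[t_m]=w_j[t]$, which is an endpoint, not an interior node, so this case is vacuous. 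I expect the main obstacle to be the careful bookkeeping in the middle step: nailing down that a connected portion of the trail consisting only of time-$\ge t$ nodes is forced to lie at a single time slice and that its only admissible edges are the corruption edges $y_k[t]\to u_k[t]$, together with the collider/fork analysis that shows such a portion cannot be ``crossed'' actively given a conditioning set devoid of time-$t$ variables. Once that rigidity is established, the activity-blocking conclusion follows from the definition of an active trail applied at the node $y_j[t]$ (a non-collider not in the conditioning set), so for every $t>0$ the trail is inactive, as claimed.
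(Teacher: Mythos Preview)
Your proposal contains a genuine gap. The core of your argument traces the trail inward from $w_j[t]$ and analyzes the maximal ``future segment'' adjacent to that endpoint. You then conclude: ``The remaining case is that the only future node is $w_j[t]$ itself \ldots\ i.e.\ the trail leaves $w_j[t]$ directly to the past; but then it contains no node at time $\ge t$ other than the endpoint, contradicting the hypothesis.'' This inference is false. The trail can leave $w_j[t]$ into the past, wander, and later re-enter times $\ge t$ before returning to $w_i[t_1]$. Nothing forces the hypothesized future node $\alpha_{b_m}[t_m]$ to lie in the connected future segment touching $w_j[t]$; it can sit in an entirely separate ``future island'' in the middle of the trail. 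Your case analysis never addresses such an island, so the proof is incomplete. A secondary issue: the claim that a connected all-times-$\ge t$ portion is ``forced to be exactly $t$'' is also incorrect, since from $y_j[t]$ the trail can step to a child $y_k[t']$ with $t'>t$; the segment can climb arbitrarily high in time.

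The paper's argument avoids all of this by working at a single pivot rather than a segment. It picks the intermediate node of \emph{maximal} time index, $t_m=\max\{t_2,\ldots,t_{r-1}\}\ge t$, and shows that this node (or, in the tie case $t_m=t_{m\pm 1}$ forced by a same-time corruption edge $y_{b_m}[t_m]\to u_{b_m}[t_m]$, its neighbor) is a collider whose descendants all live at times $\ge t$ and hence cannot be in $\theta=\{w_j^{(t-1)},\mathcal{W}_{\bar j\bar i}^{(t-1)}\}$. Maximality of $t_m$ is exactly what guarantees arrows point inward at that node, independent of where on the trail it sits; this is the missing idea in your approach. If you want to salvage your segment-based viewpoint, you would need to treat every maximal future island, observe that its two boundary edges (to past neighbors) are both directed into the island, and then locate an unobserved collider inside --- which is essentially the maximal-time argument restated.
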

\begin{proof}
Consider any trail from a node in $w_i^{(t-1)}$ to $w_j[t]$ in $G'_Z$. Denote this by $trlG'_Z:= w_i[t_1] = \alpha_{b_1}[t_1] -
\alpha_{b_2}[t_2] - \cdots - \alpha_{b_{r-1}}[t_{r-1}] -
\alpha_{b_r}[t_r] = w_j[t]$ where $0\leq t_1<t$. Here, $b_k$ denotes
the corresponding vertex in $V$ for $k=\{1,2,\dots , r\}$. Also,
$\alpha _{b_k}[t_k]=u_{b_k}[t_k]$ if $b_k \in Z$ or $\alpha
_{b_k}[t_k]=y_{b_k}[t_k]$ otherwise. For compact notation, set $\theta:=\{w _j^{(t-1)},\mathcal{W} _{\bar j\bar i}^{(t-1)} \}$.

{\it The trail has length at least 3.}
As $i\to j \notin A$ and if $j\notin Z$, then $y_j[t]$ does not
dynamically depend on process $y_i$ and clearly not on $u_i$. If $j\in
Z$, then by \eqref{eq:corruptionModel}, $u_j[t]$ does not dynamically
depend on $y_i$ nor $u_i$. Thus, there is no direct link of the form
$\alpha _i[t']\to \alpha _j[t"]$ in $G'_Z$, for any $t',t"$. In
particular, $w_i[t_1]\to w_j[t]\notin G'_Z$. Thus, there are at least
3 nodes in the trail, $trlG'_Z$. 

{\it Unobserved collider in trail.} 
Without loss of generality, choose $t_m = \max\{t_1,\ldots,t_{r-1} \} \ge t$. 
Consider the sub-trail $subtrl' := \alpha_{b_{m-1}}[t_{m-1}] - \alpha_{b_m}[t_m]
- \alpha_{b_{m+1}}[t_{m+1}]$ of $trlG'_Z$. By maximality of $t_m$,
$t_m\ge t_{m-1}$ and $t_m\ge t_{m+1}$. We will show that one of
$\alpha_{b_{m-1}}[t_{m-1}],\ \alpha_{b_m}[t_m],$ and $
\alpha_{b_{m+1}}[t_{m+1}]$ is a collider not in $\theta $ and therefore the
trail $trlG'_Z$ cannot be active given $\theta$. 

Suppose $t_m> t_{m-1}$ and $t_m> t_{m+1}$. Then, $subtrl'$ is of the form, $\alpha_{b_{m-1}}[t_{m-1}] \to \alpha_{b_m}[t_m] \leftarrow \alpha_{b_{m+1}}[t_{m+1}]$. Note that, as $t_m\geq t,$ it follows that neither  $\alpha_{b_m}[t_m]$ nor any of its descendants can be in $\theta$ and hence not observed.

Now, consider $t_m> t_{m-1}$ and $t_m= t_{m+1}$. (The case of $t_m> t_{m+1}$ and $t_m= t_{m-1}$ can be proven similarly). By the generative
model in \eqref{eq:genmodel}, by strict causality, for any node $p\in V$, $y_p[t_p]$ does
not dynamically depend on any $y_q[t_p]$ for $q \in \{p,
\mathcal{P}'(p)\}$. By the perturbation model described by
\eqref{eq:corruptionModel}, for any $q\in Z$, $u_q[t_q]$ dynamically
depends only on $\{u_q^{(t_q-1)},y_q^{(t_q)}\}$. As $t_m=t_{m+1}$, we
therefore have $b_m=b_{m+1}$ such that $b_m \in Z$ and, one of $\alpha
_{b_m}[t_m]$ and $\alpha _{b_{m+1}}[t_{m+1}]$ is actually a perturbed
measurement $u_{b_m}[t_m]$ while the other being $y_{b_m}[t_m]$. 

Suppose $\alpha _{b_m}[t_m]=u_{b_m}[t_m]$. Then, $\alpha
_{b_{m+1}}[t_{m+1}]=y_{b_m}[t_m]$. As $t_m>t_{m-1}$, $subtrl'$ is in
fact $\alpha_{b_{m-1}}[t_{m-1}] \to \alpha_{b_m}[t_m]=u_{b_m}[t_m]
\gets \alpha_{b_{m+1}}[t_{m+1}]=y_{b_m}[t_m]$. Therefore,
$\alpha_{b_m}[t_m]$ is a collider and as $t_m\ge t$, this node is not
observed in $\theta$.

Suppose instead that $\alpha _{b_m}[t_m]=y_{b_m}[t_m]$. Then,
$\alpha _{b_{m+1}}[t_{m+1}]=u_{b_m}[t_m]$. As $b_m\in Z$ and
maximality of $t_m$ implies $\alpha _{b_{m+2}}[t_{m+2}]\in
\{u_{b_m}^{(t_m-1)},y_{b_m}^{(t_m-1)}\}$. Thus, we have
$\alpha_{b_{m-1}}[t_{m-1}] -\alpha_{b_m}[t_m]=y_{b_m}[t_m] \to
\alpha_{b_{m+1}}[t_{m+1}]=u_{b_m}[t_m]\gets
\alpha_{b_{m+2}}[t_{m+2}]$ in $trlG'_Z$. Therefore,
$\alpha_{b_{m+1}}[t_{m+1}]$ is a collider not observed in $\theta$.
%
%
\end{proof}
The following theorem states that the perturbed graph precisely characterizes the spurious links which arise from probabilistic relationships that are spuriously introduced due to corruption.
\begin{theorem}\label{thm:PG}
{\it
Consider a generative graph, $G=(V,A),$ consisting of $N$ nodes. Let
$Z=\{v_1,\dots , v_n\}\subset V$ be the set of $n$ perturbed nodes
where each perturbation is described by
\eqref{eq:corruptionModel}. Denote the data-streams as follows: $U_Z:=\{u_i\}_{i\in Z}$ and $Y_{\bar Z}:=\{y_j\}_{j\in \bar Z}$ where $\bar Z=V\setminus Z$. Let the measured data-streams be $\mathcal{W} =U_Z\cup Y_{\bar Z}=\{w _1, w _2, \dots , w _N\}$. Let the perturbed graph be $G_Z=(V,A_Z)$ and its associated perturbed DBN be $G'_Z =(V'_Z,A'_Z)$.
If $i\to j\notin A_Z$, then d-sep($w_j[t], w_i^{(t-1)}\mid \{w_i^{(t-1)},\mathcal{W}_{\bar{j}\bar{i}}^{(t-1)}\})$ holds in $G'_Z$ for all $t>0$.
}
\end{theorem}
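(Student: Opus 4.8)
The plan is to argue the contrapositive: assuming $G'_Z$ contains a trail $T'$ from some node $w_i[t_1]$, with $0\le t_1\le t-1$, to $w_j[t]$ that is active given $\theta:=\{w_j^{(t-1)},\mathcal{W}_{\bar j\bar i}^{(t-1)}\}$ (the conditioning set of Lemma~\ref{lem:dsepFuture}), produce from $T'$ a trail in $G$ from $i$ to $j$ witnessing $i\to j\in A_Z$ in the sense of Definition~\ref{def:pg}. Since $A\subseteq A_Z$ (Remark~\ref{rem:otherTrails}), either $i\to j\in A$, in which case nothing is left to prove, or $i\to j\notin A$, in which case Lemma~\ref{lem:dsepFuture} applies and forces every node of $T'$ other than the endpoint $w_j[t]$ to carry a time index at most $t-1$; otherwise $T'$ would be inactive. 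So assume this from now on.

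Next, project $T'$ onto $G$. Write $T':\ \alpha_{b_1}[t_1]-\cdots-\alpha_{b_r}[t_r]$ with $b_1=i$, $b_r=j$, and $\alpha_{b_m}[t_m]\in\{y_{b_m}[t_m],u_{b_m}[t_m]\}$, and collapse each maximal run of consecutive positions that share the same generative vertex into a single symbol; this yields $c_1-\cdots-c_p$ with $c_1=i$, $c_p=j$ and $c_\ell\ne c_{\ell+1}$. This is a trail in $G$, because any edge of $G'_Z$ joining two positions with \emph{distinct} generative vertices is not a corruption edge (those only link $y_k$ and $u_k$ for one fixed $k$), hence is a generative edge $y_{b_m}[\cdot]\to y_{b_{m+1}}[\cdot]$ or its reverse, so $b_m-b_{m+1}$ holds in $G$. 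The remaining work is to verify conditions P1), P2) and P3) of Definition~\ref{def:pg} for $c_1-\cdots-c_p$; two structural facts about $G'_Z$ do the heavy lifting: every edge runs strictly forward in time except the instantaneous corruption edges $y_k[s]\to u_k[s]$, and only generative edges enter $y$-nodes.

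For P1), suppose $j\notin Z$. The block of $T'$ mapping to $j$ that contains the endpoint $w_j[t]=y_j[t]$ is made of $y_j[\cdot]$-nodes joined by forward self-edges, and the cross-edge leaving that block toward $c_{p-1}$ is generative. If this cross-edge pointed \emph{out} of the block, the block-node it leaves from would be a non-collider of $T'$ equal to $w_j[\sigma]$ for some $\sigma\le t-1$, hence a member of $w_j^{(t-1)}\subseteq\theta$ — impossible for a non-collider on an active trail. So it points in and $c_{p-1}\to j\in A$, which is P1). For P2) and P3) the pivotal observation is: if $c_\ell$ is an \emph{interior} vertex of the $G$-trail with $c_\ell\notin Z$, then its $T'$-block contains only $y_{c_\ell}[\cdot]$-nodes, each equal to $w_{c_\ell}[\sigma]$ with $\sigma\le t-1$ and $c_\ell\notin\{i,j\}$, hence in $\mathcal{W}_{\bar j\bar i}^{(t-1)}\subseteq\theta$; on an active trail every observed node must be a collider, so each such block-node has both its $T'$-neighbours strictly in its past, and since a forward self-edge inside the block would make one of its endpoints a non-collider, the block is in fact the single node $y_{c_\ell}[t_m]$, whose two cross-edges are generative, incoming, and originate at strictly earlier times. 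This shows $c_{\ell-1}\to c_\ell\gets c_{\ell+1}$ in the $G$-trail, so P3) holds by contraposition; and repeating the ``observed $\Rightarrow$ collider'' argument at the $y$-node on the $c_{\ell+1}$ side, which \emph{cannot} be a collider because it has an edge pointing into $y_{c_\ell}[t_m]$, shows that this node is absent from $\theta$, which forces $c_{\ell+1}\in Z$ (handling the subcase $c_{\ell+1}=j$ with $w_j^{(t-1)}\subseteq\theta$), so P2) holds.

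I expect the main obstacle to be precisely the block bookkeeping: time-unrolling together with the twin nodes $y_k,u_k$ at each perturbed vertex expand one vertex of $G$ into a family of DBN nodes spread over several times, and one must keep careful track of which positions of $T'$ are interior, which nodes are observed, which are $y$-type and which are $u$-type, and in which direction every self-edge and cross-edge runs, so as to be certain the DBN-level active-trail conditions translate exactly into P1)--P3). Once that dictionary is nailed down the verification is routine, and the constructed trail $c_1-\cdots-c_p$ contradicts $i\to j\notin A_Z$, completing the contrapositive.
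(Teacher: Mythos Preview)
Your contrapositive argument is exactly the paper's direct proof read backwards: both project the DBN trail onto $G$ by collapsing maximal runs over a common generative vertex, both invoke Lemma~\ref{lem:dsepFuture} to force all intermediate time indices below $t$, and both then check P1)--P3) via the dichotomy ``observed non-collider versus unobserved collider'' at the cross-edges. The one piece of block bookkeeping you assert but do not justify is $c_\ell\ne i$ for interior $\ell$ (and likewise $c_{\ell+1}\ne i$ in the P2 step); the paper secures this by a trimming argument it calls $loop_i$: if some intermediate DBN node equals $w_i[t_m]$, replace $T'$ by its suffix from that position, which is a strictly shorter active trail, so by minimality no interior $w_i$-node occurs, and since P3) only bites when $c_\ell\notin Z$ (hence $y_{c_\ell}=w_{c_\ell}$), this indeed rules out $c_\ell=i$. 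Without that step your P2) and P3) verifications fail precisely when $c_\ell=i\notin Z$, because then $y_i[\sigma]=w_i[\sigma]\notin\theta$ and the ``observed $\Rightarrow$ collider'' inference is unavailable.
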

\begin{proof}
We will show that if $i\to j\notin A_Z$, then there is no trail between
$w_i^{(t-1)}$ and $w_j[t]$ that is active given 
$\{w_j^{(t-1)},\mathcal{W}_{\bar{j}\bar{i}}^{(t-1)}\}$ in $G'_Z,$ for all $t>0$. For rest of the proof, denote $\theta:=\{w _j^{(t-1)},\mathcal{W} _{\bar j\bar i}^{(t-1)} \}$.  
Note that if $i\to j\notin A_Z$, then there is no directed edge from
$i$ to $j$ in $G$, and every trail from $i$ to $j$ in $G$ violates at
least one of the conditions of Definition~\ref{def:pg}. We will consider these cases separately and show that no active
trail exists in $G'_Z$ in each case.
Denote a trail connecting a node in $w_i^{(t-1)}$
and $w_j[t]$ in $G'_Z$, by $trlG'_Z:= w_i[t_1] = \alpha_{b_1}[t_1] -
\alpha_{b_2}[t_2] - \cdots - \alpha_{b_{r-1}}[t_{r-1}] -
\alpha_{b_r}[t_r] = w_j[t]$ where $0\leq t_1<t$ and $b_k$ denotes
the corresponding vertex in $V$ for $k=\{1,2,\dots , r\}$. Here, $\alpha _v[t_v]=u_v[t_v]$ if $v\in Z$ or $\alpha _v[t_v]=y_v [t_v]$ otherwise. Using Lemma \ref{lem:dsepFuture}, if any $t'$ in $\{t_2, \dots , t_{r-1}\}$ is such that $t'\ge t$, then $trlG'_Z$ is not active. Now, consider $0\le t_1,t_2,t_3,\dots , t_{r-1}<t$. Construct a trail in $G$, $trlG:=i=v_1-v_2-v_3\dots v_{k-1}-v_k=j$ from the trail $trlG'_Z$: $w_i[t_1]=\alpha _{b_1}[t_1]-\alpha _{b_2}[t_2]- \dots - \alpha _{b_{r-1}}[t_{r-1}]-\alpha _{b_r}[t_r]=w_j[t]$ as follows: 
\begin{algorithm}
	\begin{algorithmic}
		\Statex \textbf{Initialize:} $k=1$ and $v_1=b_1$.
		\For{$l=1:r-1 $}
			\If{$b_{l+1} \ne b_l$ in $\alpha _{b_{l}}[t_l]-\alpha _{b_{l+1}}[t_{l+1}]$ along $trlG'_Z$} 
				\State Set $v_{k+1}= b_{l+1}$.
				\State Add edge $v_k-v_{k+1}$ with the
                                same direction as \State $\alpha
                                _{b_l}[t_l]-\alpha
                                _{b_{l+1}}[t_{l+1}]$.
                                \State Set $s_k = t_l$ and $\tau_{k+1}
                                = t_{l+1}$
				\State Set $k=k+1$
			\EndIf
		\EndFor
	\end{algorithmic}
\end{algorithm}

\noindent Additionally, note that $v_k - v_{k+1}$ corresponds to an edge
$\alpha_{v_k}[s_k] - \alpha_{v_{k+1}}[\tau_{k+1}]$ in $G'_Z$.

 Now, let us reason out why such a construction is always
feasible. To this, we claim that for any successive pair $\alpha
_{b_{l}}[t_l]-\alpha _{b_{l+1}}[t_{l+1}]$, either $b_l=b_{l+1}$ or,
$b_l\ne b_{l+1}$ and $b_{l}-b_{l+1} \in A$ with the same direction as
in $\alpha _{b_{l}}[t_l]-\alpha _{b_{l+1}}[t_{l+1}]$. Assume $\alpha
_{b_{l}}[t_l]\to \alpha _{b_{l+1}}[t_{l+1}]$. (The case of $\alpha
_{b_{l}}[t_l]\leftarrow \alpha _{b_{l+1}}[t_{l+1}]$ is similar). Then,
either $t_l=t_{l+1}$ or $t_l<t_{l+1}$. Consider, $t_l=t_{l+1}$. Then,
the link must have the form $y_{b_l}[t_l] \to u_{b_l}[t_l]$, as this
is the only instantaneous influence defined in (\ref{eq:genmodel}) or
(\ref{eq:corruptionModel}). Thus, $b_l = b_{l+1}$ in this case.

Suppose, $t_l< t_{l+1}$. Either, $b_{l+1}\in Z$ or $b_{l+1}\notin Z$. Consider $b_{l+1}\in Z$. By the perturbation model described by \eqref{eq:corruptionModel}, $\alpha _{b_l}[t_l]\in \{y_{b_{l+1}}^{(t_{l+1}-1)}, u_{b_{l+1}}^{(t_{l+1}-1)}\}$. Therefore, $b_l=b_{l+1}$. Suppose, $b_{l+1}\notin Z$. Then, $\alpha _{b_{l+1}}[t_{l+1}]=y_{b_{l+1}}[t_{l+1}]$. By the generative model in \eqref{eq:genmodel}, we either have dynamic dependence on self-history or history of other nodes. That is, $\alpha _{b_l}[t_l]\in \{ y_{b_l}^{(t_{l+1}-1)},\underset{q\in \mathcal{P}'(b_{l+1})}{\cup}y_q^{(t_{l+1}-1)}\}$. Then, $b_l=b_{l+1}$ when there is dependence on self-history. Otherwise, $b_{l}\in \mathcal{P}'(b_{l+1})$. Thus, $b_{l}\to b_{l+1}\in A$. Let us consider an example- from a trail of the form $u_1[t_1] \leftarrow y_1[t_2] \leftarrow y_{2}[t_3] \rightarrow
y_3[t_4] \rightarrow y_3[t_5] \rightarrow u_3[t]$ in $G'_Z$, a trail
$trlG$ in $G$ can be constructed as $1 \leftarrow 2\rightarrow 3$. 

Additionally, we may assume that for $m=2,\cdots , r-1$ we have that $\alpha _{b_m}[t_m]\ne w_i[t_m]$ in $trlG'_Z$. If $\alpha _{b_m}[t_m]= w_i[t_m]$ for some $m>1$, then the sub-trail of $trlG'_Z$, $w_i[t_m]=\alpha _{b_m}[t_m]-\alpha _{b_{m+1}}[t_{m+1}]-\cdots -\alpha _r[t_r]=w_j[t]$ is a trail from $w_i[t_m]\in w_i^{(t-1)}$ to $w_j[t]$. This trail is of strictly shorter length than $trlG'_Z$. Thus, if the shorter trail cannot be active then the longer trail, $trlG'_Z$, cannot be active either. Also, by following the construction procedure described above, this condition implies that $v_l\ne i$ for $l=2,3,\cdots , k$ in $trlG$. Call this condition $loop_i$. 
To summarize, let $trlG:=i=v_1-v_2-v_3\dots v_{k-1}-v_k=j$ be the trail in $G$ constructed by following the above procedure from the trail $trlG'_Z$: $w_i[t_1]=\alpha _{b_1}[t_1]-\alpha _{b_2}[t_2]- \dots - \alpha _{b_{r-1}}[t_{r-1}]-\alpha _{b_r}[t_r]=w_j[t]$. Since, $i\to j \notin A_Z$, this trail must violate any of the conditions P\ref{case:end}), P\ref{case:collider}) and P\ref{case:notCollider}). We will now consider these cases separately and prove that there is no corresponding active trail in $G'_Z$.

If condition P\ref{case:end}) is violated, then $trlG$ must have that
$j\notin Z$ and $v_{k-1}\leftarrow j$. In this case, $w_j =
y_j$. Then, either $b_{r-1}=j$ or $b_{r-1}\ne j$. By construction of
$trlG$, if $b_{r-1}\ne j$, then $b_{r-1}=v_{k-1}$. As
$v_{k-1}\leftarrow j$, we must then have $\alpha
_{b_{r-1}}[t_{r-1}]\gets  \alpha _{b_r}[t_r]$. However, this implies
$t_r=t < t_{r-1}$ which violates the condition that $0\le
t_1,t_2,t_3,\dots , t_{r-1}<t$. Thus, $b_{r-1}=j$. That is, $\alpha
_{b_{r-1}}[t_{r-1}]=y_{j}[t_{r-1}]$. As $t_{r-1}<t$ and $j\notin Z$ we
have $\alpha _{b_{r-1}}[t_{r-1}]=y _{j}[t_{r-1}]\to \alpha
_{b_{r}}[t]=y_{j}[t]$ as a sub-trail of $trlG'_Z$. Clearly, $y
_{j}[t_{r-1}]$ is not a collider. As $t_{r-1}<t$, we have $ y
_{j}[t_{r-1}] \in \theta $. Thus the trail cannot be active. 

Recall the definitions of $s_k$ and $\tau_{k+1}$ during construction of
the trail in $G$. 
If condition P\ref{case:collider}) is violated, then a sub-path of $trlG$, $v_{m-1} \to v_m \leftarrow v_{m+1}$, must have a
collider, $v_m$, such that $v_m \notin Z$
and $v_{m+1}\notin Z $ where $m=\{2,3\cdots ,k-1\}$. If $v_{m+1} = j$
and $\tau_{m+1}=t$,
P\ref{case:end}) also fails, and the argument above shows that the
trail in $G'_Z$ is not active. If $v_{m+1} = j$ and $\tau_{m+1} <t$ then 
we have that $\alpha _{v_{m+1}}[\tau_{m+1}]=y_{v_{m+1}}[\tau_{m+1}] \in \theta$ which is an observed
node along the trail and is not a collider. Thus, the trail $trlG'_Z$
cannot be active. So, assume that $v_{m+1} \ne j$. By condition
$loop_i$, $m+1\ne i$. As $v_m \leftarrow v_{m+1}\in trlG$, by
construction we must have $y_{v_m}[s_m]=\alpha _{v_m}[s_m] \gets
\alpha _{v_{m+1}}[\tau_{m+1}]=y_{v_{m+1}}[\tau_{m+1}] $ along $trlG'_Z$
with $\tau_{m+1}<s_m<t$. Note that since $v_{m+1}\notin Z$ and $\tau_{m+1}<t$, $\alpha _{v_{m+1}}[\tau_{m+1}] = y_{v_{m+1}}[\tau_{m+1}]$ is an
observed non-collider in $\theta$. Thus, the trail cannot be active. 

Finally consider the case that P\ref{case:notCollider}) is
violated. Then along the trail, $trlG$, in $G$, there must be a sub-trail $v_{m-1} - v_m - v_{m+1}$ such that the intermediate
node, $v_m$, is not a collider and $v_m \notin Z$. As $v_m$ is not a collider, there is one outgoing directed edge from $v_m$ in the trail $trlG$ to either $v_{m-1}$ or $v_{m+1}$. By construction, there must be a corresponding node $\alpha_{v_m}[t_f]$ in
the trail $trlG'_Z$ such that it has an outgoing edge to either $\alpha _{v_{m-1}}[t_{p}]$ or $\alpha _{v_{m+1}}[t_{q}]$ for some $t_p>t_m$ or $t_q > t_m$ respectively. Clearly, there is one $\alpha _{v_m}[t_m]$ in $trlG'_Z$ which is a non-collider. Then, as $v_m\notin Z$, we
must have that $\alpha_{v_m}[t_m] = w_{v_m}[t_m] =
y_{v_m}[t_m]$. Note that $v_m \ne i$ by condition $loop_i$. As $t_m < t$, $\alpha _{v_m}[t_m]$ is an intermediate non-collider node in $\theta $ and is thus observed. Hence, $trlG'_Z$ cannot be active. 
\end{proof}
We will now show that if conditional directed information, $I(w_i \to w_j \parallel \mathcal{W}_{\bar{j}\bar{i}})$, are computed using corrupt data-streams, and were applied for causal structure inference, then spurious links in the graph would result.
\begin{corollary}\label{cor:DItoPG}
\it
Consider a generative graph, $G=(V,A),$ consisting of $N$ nodes. Let
$Z=\{v_1,\dots , v_n\}\subset V$ be the set of $n$ perturbed nodes
where each perturbation is described by
\eqref{eq:corruptionModel}. Denote the data-streams as follows: $U_Z:=\{u_i\}_{i\in Z}$ and $Y_{\bar Z}:=\{y_j\}_{j\in \bar Z}$ where $\bar Z=V\setminus Z$. Let the measured data-streams be $\mathcal{W} =U_Z\cup Y_{\bar Z}=\{w _1, w _2, \dots , w _N\}$. Let the perturbed graph be $G_Z=(V,A_Z)$.
If $I(w_i \to w_j \parallel
\mathcal{W}_{\bar{j}\bar{i}})>0$, then $i\to j\in A_Z$.
\end{corollary}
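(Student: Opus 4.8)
The strategy is to prove the contrapositive in its strong form: if $i\to j\notin A_Z$, then $I(w_i \to w_j \parallel \mathcal{W}_{\bar{j}\bar{i}})=0$; since directed information is always non-negative, this is exactly what is needed. So suppose $i\to j\notin A_Z$. Theorem~\ref{thm:PG} then applies and yields, for every $t>0$, that the set $\{w_j^{(t-1)},\mathcal{W}_{\bar{j}\bar{i}}^{(t-1)}\}$ d-separates $w_j[t]$ from $w_i^{(t-1)}$ in the perturbed DBN $G'_Z$. Because $G'_Z$ is a Bayesian network for the random variables it indexes --- in particular for the measured streams $\mathcal{W}$ --- d-separation forces the conditional independence
\[
w_j[t]\ \indep\ w_i^{(t-1)}\ \big|\ \{w_j^{(t-1)},\mathcal{W}_{\bar{j}\bar{i}}^{(t-1)}\}\qquad\text{for every }t>0,
\]
equivalently $P_{w_j[t]\mid w_j^{(t-1)},w_i^{(t-1)},\mathcal{W}_{\bar{j}\bar{i}}^{(t-1)}}=P_{w_j[t]\mid w_j^{(t-1)},\mathcal{W}_{\bar{j}\bar{i}}^{(t-1)}}$ almost surely for each such $t$.

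Next I would expand $I(w_i \to w_j \parallel \mathcal{W}_{\bar{j}\bar{i}})$ using the product forms of the causal conditionals in its definition. Writing $P_{w_j\parallel w_i,\mathcal{W}_{\bar{j}\bar{i}}}=\prod_{t=1}^{T}P_{w_j[t]\mid w_j^{(t-1)},w_i^{(t-1)},\mathcal{W}_{\bar{j}\bar{i}}^{(t-1)}}$ and $P_{w_j\parallel \mathcal{W}_{\bar{j}\bar{i}}}=\prod_{t=1}^{T}P_{w_j[t]\mid w_j^{(t-1)},\mathcal{W}_{\bar{j}\bar{i}}^{(t-1)}}$, the logarithm of the ratio inside the expectation splits termwise, so the standard chain-rule identity for directed information gives
\[
I(w_i \to w_j \parallel \mathcal{W}_{\bar{j}\bar{i}})=\sum_{t=1}^{T} I\!\left(w_i^{(t-1)};\, w_j[t]\ \big|\ w_j^{(t-1)},\mathcal{W}_{\bar{j}\bar{i}}^{(t-1)}\right),
\]
a sum of conditional mutual informations. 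Each summand is non-negative and vanishes precisely when the conditional independence displayed above holds for that $t$; since it holds for all $t$, every summand is zero, so $I(w_i \to w_j \parallel \mathcal{W}_{\bar{j}\bar{i}})=0$. Thus $i\to j\notin A_Z$ implies the directed information is zero, which is the contrapositive of the claim and proves the corollary.

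The one place that needs care is matching the conditioning sets: the set appearing (in the proof of) Theorem~\ref{thm:PG}, namely $\{w_j^{(t-1)},\mathcal{W}_{\bar{j}\bar{i}}^{(t-1)}\}$, must be identified with exactly the variables conditioned on in the $t$-th one-step predictive density of the directed information, so that the d-separation statement collapses precisely the $t$-th factor of the causal-conditioning product. Once that bookkeeping is checked, nothing else is delicate: the argument invokes only the Bayesian-network property of $G'_Z$ (d-separation implies conditional independence) and the decomposition of directed information into conditional mutual informations. In contrast to Theorem~\ref{thm:DIG} (cf. Remark~\ref{rem:positivity}), no positivity assumption on the joint distribution is needed here, since this direction uses only the ``d-sep $\Rightarrow$ independence'' implication, which holds in any Bayesian network.
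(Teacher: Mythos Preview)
Your proof is correct and follows essentially the same route as the paper: prove the contrapositive, invoke Theorem~\ref{thm:PG} to obtain d-separation in $G'_Z$ for every $t$, use the Bayesian-network property to turn this into $P_{w_j[t]\mid w_j^{(t-1)},w_i^{(t-1)},\mathcal{W}_{\bar{j}\bar{i}}^{(t-1)}}=P_{w_j[t]\mid w_j^{(t-1)},\mathcal{W}_{\bar{j}\bar{i}}^{(t-1)}}$, and conclude $I(w_i\to w_j\parallel \mathcal{W}_{\bar j\bar i})=0$. Your version is somewhat more explicit than the paper's in spelling out the chain-rule decomposition of directed information into a sum of conditional mutual informations, but the underlying argument is the same.
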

\begin{proof}
We will show that if $i\to j \notin A_Z$, then $I(w_i \to w_j \parallel
\mathcal{W}_{\bar{j}\bar{i}})=0$. Suppose, $i\to j\notin A_Z$. Let $G'_Z
=(V',A'_Z)$ be the perturbed dynamic Bayesian network (DBN) associated with the perturbed graph, $G_Z$. Then, using Theorem ~\ref{thm:PG}, for all $t>0$, d-sep($w_j[t], w_i^{(t-1)}\mid w_i^{(t-1)},\mathcal{W}_{\bar{j}\bar{i}}^{(t-1)})$ holds in $G'_Z$. In other words, this implies $P_{w_j[t]\mid w_j^{(t-1)},w_i^{(t-1)},\mathcal{W}_{\bar{j}\bar{i}}^{(t-1)}}=P_{w_j[t]\mid w_j^{(t-1)},\mathcal{W}_{\bar{j}\bar{i}}^{(t-1)}}$ will hold true for all $t$ and thus, $I(w_i \to w_j \parallel
\mathcal{W}_{\bar{j}\bar{i}})=0$. 
\end{proof}

The following example illustrates the intuition behind the presence of active trails and hence, the spurious links in the perturbed graph. 
\begin{example}\label{eg:intuition}
Consider a generative graph as shown in Figure ~\ref{fig:proof_idealG}). Suppose node $3$ is subject to data-corruption and let $u_3$ be its measured data-stream. Denote the measured data-streams at nodes 1 and 2 as $y_1$ and $y_2$. $u_3$ is related to its ideal counterpart
$y_3$ via \eqref{eq:corruptionModel}. The measured data streams are $\{w_1=y_1,w_2=y_2,w_3=u_3\}$.
 The perturbed graph $G_Z$, is constructed as defined in definition ~\ref{def:pg} and is shown in figure ~\ref{fig:Gzproof}). The corresponding perturbed DBN, $G'_Z$, is shown for 3 time steps in figure ~\ref{fig:pdbnProof}). We will reason out the presence and absence of an edge in $G_Z$ by identifying the presence and absence of active trails in the perturbed DBN.

Consider $1\to 3 \in A_Z$. There is a trail $w_1[0]=y_1[0]\to w_2[1]=y_2[1]\gets y_3[0]\to y_3[1]\to w_3[2] \in A'_Z$. Note that the collider $w_2[1]$ is observed. Therefore, the trail is active given $\{w_3^{(1)},w_2^{(1)}\}$.

Take the edge $2\to 3 \in A_Z$. There exists a trail $w_2[1]=w_2[1]\gets y_3[0]\to w_3[2]$ in $G'_Z$. Note that the node $y_3[0]$ is not a collider and is not observed. Thus, the trail is active given $\{w_3^{(1)},w_1^{(1)}\}$. 

The edges $3\to1$ and $2\to 1$ are absent in $G_Z$. Ideally, we look for a trail from $w_3^{(t-1)}$ to $w_1[t]=y_1[t]$ that is active given $\{w_1^{(t-1)},w_2^{(t-1)}\}$ and a trail from $w_2^{(t-1)}$ to $w_1[t]=y_1[t]$ that is active given $\{w_1^{(t-1)},w_3^{(t-1)}\}$. Note that every trail from $w_3^{(t-1)}$ and $w_2^{(t-1)}$ to $w_1[t]$ traverses through a node in $w_1^{(t-1)}$ which is in the observed set and this holds for all $t$. This blocks the information flow along the trail. Therefore, all these trails are inactive. 
\end{example}
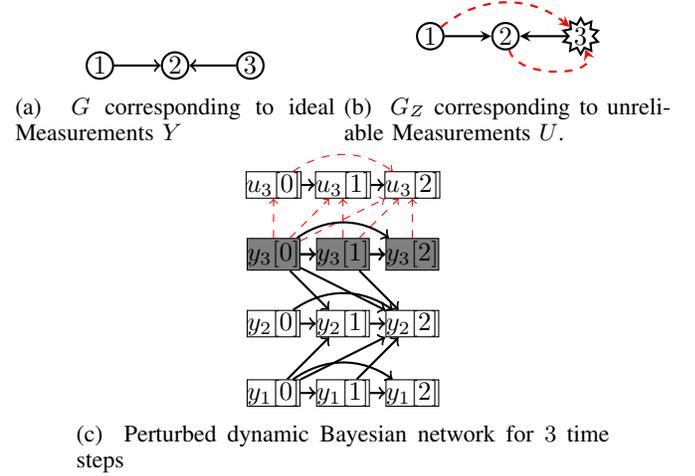
\begin{figure}[t]
  \centering
  \vspace{2mm} 
  \subcaptionbox{\label{fig:proof_idealG} $G$ corresponding to ideal Measurements $Y$}{\begin{subfigure}{0.48\columnwidth}
    \centering
        \begin{tikzpicture}[scale=0.25]
                \tikzstyle{vertex}=[circle,fill=none,minimum size=10pt,inner sep=0pt,thick,draw]
        \tikzstyle{pvertex}=[star,star points=10,fill=white,minimum size=10pt,inner sep=0pt,thick,draw]
        
          \node[vertex] (n1) {$1$};
          \node[vertex, right of=n1] (n2) {$2$};
          \node[vertex,right of=n2] (n3) {$3$};
                    
          \draw[->,thick] (n1)--(n2);
          \draw[->,thick] (n3)--(n2);    
        \end{tikzpicture}
        
  \end{subfigure}}
    \subcaptionbox{
        \label{fig:Gzproof} $G_Z$ corresponding to unreliable Measurements $U$.
        }{\begin{subfigure}{0.49\columnwidth}
    \centering
        \begin{tikzpicture}[scale=0.25]
                \tikzstyle{vertex}=[circle,fill=none,minimum size=10pt,inner sep=0pt,thick,draw]
        \tikzstyle{pvertex}=[star,star points=10,fill=white,minimum size=10pt,inner sep=0pt,thick,draw]
          \node[vertex] (n1) {$1$};
          \node[vertex, right of=n1] (n2) {$2$};
          \node[pvertex,right of=n2] (n3) {$3$};

          \draw[-{stealth[scale=3.0]},thick] (n1)--(n2);
          \draw[-{stealth[scale=3.0]},thick] (n3)--(n2);
          \draw[-{stealth[scale=3.0]},red,thick,dashed] (n2)to[out=285,in=290](n3);
          \draw[-{stealth[scale=3.0]},red,thick,dashed] (n1) to[out=40,in=140] (n3);
        \end{tikzpicture}
          \end{subfigure}}
          \subcaptionbox{\label{fig:pdbnProof}
  Perturbed dynamic Bayesian network for $3$ time steps}{\begin{subfigure}{0.8\columnwidth}
   \centering
 \begin{tikzpicture}[scale=0.7]
 \tikzstyle{vertex}=[fill=none,minimum size=10pt,inner sep=0pt,thin,draw]
  \tikzstyle{Zuvertex}=[fill=gray,minimum size=12pt,inner sep=0pt,thin,draw]
  \node[Zuvertex] (n1) {$y_3[0]$};
  \node[vertex] (n16) at ($(n1)-(0,-3.75em)$) {$u_3[0]$};
    \node[vertex] (n17) at ($(n16)-(-3.75em,0)$) {$u_3[1]$};
      \node[vertex] (n18) at ($(n17)-(-3.75em,0)$) {$u_3[2]$}; 
  \node[vertex] (n2) at ($(n1)-(0,3.75em)$) {$y_2[0]$};%

        \node[vertex] (n3) at ($(n2)-(0,3.75em)$) {$y_1[0]$};%
	  \node[Zuvertex] (n6) at ($(n1)-(-3.75em,0)$) {$y_3[1]$};

  \node[vertex] (n7) at ($(n6)-(0,3.75em)$) {$y_2[1]$};%

        \node[vertex] (n8) at ($(n7)-(0,3.75em)$) {$y_1[1]$};%
	\node[Zuvertex] (n11) at ($(n6)-(-3.75em,0)$) {$y_3[2]$};

  \node[vertex] (n12) at ($(n11)-(0,3.75em)$) {$y_2[2]$};%

        \node[vertex] (n13) at ($(n12)-(0,3.75em)$) {$y_1[2]$};%

  \draw[->,thick] (n1)--(n6);
  \draw[->,thick] (n2)--(n7);
  \draw[->,thick] (n3)--(n8);
  
  \draw[->,red,dashed] (n1)--(n16);  
    \draw[->,red,dashed] (n6)--(n17);
      \draw[->,red,dashed] (n11)--(n18);
         \draw[->,red,dashed] (n1)--(n17);
               \draw[->,red,dashed] (n1)--(n18);
                     \draw[->,red,dashed] (n6)--(n18);   
        \draw[->,thick] (n16)--(n17);
          \draw[->,thick] (n17)--(n18);
  
  \draw[->,thick] (n1)--(n6);
  \draw[->,thick] (n1)--(n7);

  \draw[->,thick] (n6)--(n11);
  \draw[->,thick] (n7)--(n12);
  \draw[->,thick] (n8)--(n13);
  
  \draw[->,thick] (n6)--(n11);
  \draw[->,thick] (n6)--(n12);
  
 \draw[->,thick] (n3)--(n12);
  \draw[->,thick] (n8)--(n12);  
   \draw[->,thick] (n3)--(n7);
  \draw[->,thick] (n1) to[out=35,in=145] (n11);
    \draw[->,red,dashed] (n16) to[out=35,in=145] (n18);
  \draw[->,thick] (n1)--(n12);
  
  \draw[->,thick] (n2) to[out=35,in=145] (n12);
  \draw[->,thick] (n3) to[out=35,in=145] (n13); 
  
\end{tikzpicture}
\end{subfigure}}
        \caption{
    \label{fig:proofEg} This figure illustrates the proof of Theorem ~\ref{thm:PG}
  }
\end{figure}
%
\begin{remark}\label{rem:suff}
The results in Theorem \ref{thm:PG} and Corollary \ref{cor:DItoPG} respectively shows that existence of active trails is the PDBN and non-zero conditional directed information is sufficient to infer the presence of a directed link in the perturbed graph. However, under a mild assumption on the generative and the perturbation model, it can be shown that the respective conditions are also necessary to detect a directed link in the perturbed graph.
\end{remark}
\begin{assumption}\label{assume:relax}
Let the following conditions on the generative and the perturbation model hold:
\begin{enumerate}[C1)]
\item \label{c1}In the generative model ~\eqref{eq:genmodel}, for all agents 
  $i \in \{1,2,\dots , N\}$, and all $j\in \pa'(i)$, there is a number
  $k_{ij}\ge 1$ such that $y_j[t-k_{ij}]$ is an argument of $f_i$.
\item \label{c2}For all perturbed nodes $i\in Z$, in the perturbation model ~\eqref{eq:corruptionModel}, there is a number $k_i\ge 1$ such that $g_i$ always takes $y_i[t-k_i]$ as it's argument.
\end{enumerate}
In addition, let at least one of the following conditions on corruption model hold:
\begin{enumerate}[B1)]
\item \label{c3} If a node $i\in Z$, then there is a number $k'_i\geq 1$ such that $y_i[t-k'_i]$ is an argument of $f_i$ in ~\eqref{eq:genmodel}. 
\item \label{c4} If a node $i\in Z$, then $y_i[t]$ is an argument of $g_i$ in ~\eqref{eq:corruptionModel}. 
\end{enumerate}
\end{assumption}
The following theorem asserts that if $i\to j\in A_Z$ then there exists a corresponding active trail in perturbed DBN.
\begin{theorem}\label{thm:PGrelaxed}
{\it
Consider a generative graph, $G=(V,A),$ consisting of $N$ nodes. Let
$Z=\{v_1,\dots , v_n\}\subset V$ be the set of $n$ perturbed nodes
where each perturbation is described by
\eqref{eq:corruptionModel}. Denote the data-streams as follows: $U_Z:=\{u_i\}_{i\in Z}$ and $Y_{\bar Z}:=\{y_j\}_{j\in \bar Z}$ where $\bar Z=V\setminus Z$. Let the measured data-streams be $\mathcal{W} =U_Z\cup Y_{\bar Z}=\{w _1, w _2, \dots , w _N\}$. Suppose, the generative model and the perturbation model satisfies the conditions for dynamics that is mentioned in Assumption ~\ref{assume:relax}. If there is a directed edge from $i$ to $j$ in perturbed graph, $G_Z=(V,A_Z)$, then there exists a trail between a node in $w_i^{(t-1)}$ and $w_j[t]$ that is active given 
$\{w_j^{(t-1)},\mathcal{W}_{\bar{j}\bar{i}}^{(t-1)}\}$ in $G'_Z,$ for some $t>0$.
}
\end{theorem}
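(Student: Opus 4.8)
The plan is to prove the converse of Theorem~\ref{thm:PG}: given an edge $i\to j\in A_Z$, we must exhibit a trail in $G'_Z$ between some node of $w_i^{(t-1)}$ and $w_j[t]$ that is active given $\theta:=\{w_j^{(t-1)},\mathcal{W}_{\bar j\bar i}^{(t-1)}\}$ for some $t$. By Definition~\ref{def:pg}, $i\to j\in A_Z$ means there is a trail $trlG: i=v_1 - v_2 - \cdots - v_k=j$ in $G$ that does \emph{not} violate any of P\ref{case:end}), P\ref{case:collider}), P\ref{case:notCollider}) (equivalently, whose negated ``if'' clauses fail). The natural approach is to lift $trlG$ to a concrete trail in $G'_Z$ by assigning a time-stamp $t_m$ to each $v_m$, using the dynamics assumptions in Assumption~\ref{assume:relax} to realize each edge $v_m - v_{m+1}$ as an actual arc in $A'_Z$, and then to check that the resulting trail is active given $\theta$.

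First I would construct the time-stamps. The idea is to set the terminal stamp to $t$ at $v_k=j$ and propagate backwards along the trail, choosing stamps so that each arc of $trlG$ becomes a legal arc in the PDBN. For a fork or chain segment $v_{m-1}\leftarrow v_m$ or $v_m\to v_{m+1}$ the child's stamp must be strictly larger than the parent's (condition C\ref{c1}) guarantees such a lag $k_{ij}$ exists, so we can always realize the arc with a strictly earlier parent stamp); for a collider segment $v_{m-1}\to v_m\leftarrow v_{m+1}$ both neighbors need stamps strictly below that of $v_m$. One must also insert, whenever $v_m\in Z$, the local gadget $y_{v_m}[t_m]\to u_{v_m}[t_m]$ (or its continuation through $u_{v_m}$'s own past) so that the measured node $w_{v_m}=u_{v_m}$ is the one actually appearing on the trail; conditions C\ref{c2}) and B\ref{c3})/B\ref{c4}) are exactly what make these insertions possible while keeping stamps consistent. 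A clean way to organize this is to process the trail segment-by-segment, maintaining the invariant that all stamps assigned so far are $<t$ except the endpoint $w_j[t]$, and that $v_1=i$ receives some stamp $t_1<t$; since the trail is finite, finitely many strict-inequality constraints are imposed, and they can always be satisfied by taking $t$ large enough and spreading the stamps out.

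Next I would verify activity of the lifted trail given $\theta$. Go through each interior triple. If $v_m$ is a non-collider in $trlG$, then since $trlG$ does \emph{not} violate P\ref{case:notCollider}) we have $v_m\in Z$, so the node that physically sits on the lifted trail at that position is a $u_{v_m}[\cdot]$ node at a time $<t$ --- but wait, the conditioning set $\mathcal{W}_{\bar j\bar i}^{(t-1)}$ contains $u_{v_m}^{(t-1)}$ only if $v_m\notin\{i,j\}$. Here the key point is that, as in the proof of Theorem~\ref{thm:PG}, we may assume $v_m\ne i$ for $m\ge 2$ (the $loop_i$ reduction: if the trail revisits $i$, pass to a shorter suffix), and $v_m=j$ for interior $m$ can also be handled by truncating. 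For genuinely interior $v_m$ that is a non-collider, we must ensure the corresponding node is \emph{not} in $\theta$: since $v_m\in Z$, we route the trail through $y_{v_m}[t_m]$ (the unobserved ideal copy) rather than $u_{v_m}[t_m]$ --- the arc $y_{v_m}[t_m]\to u_{v_m}[t_m]$ lets us keep the fork/chain shape at the hidden node. For a collider $v_m$ in $trlG$, non-violation of P\ref{case:collider}) gives $v_m\in Z$ or $v_{m+1}\in Z$ (after symmetrizing, $v_m\in Z$ or one appropriate neighbor is corrupt); in the corrupt-$v_m$ case the lifted collider node can be taken to be $u_{v_m}[t_m]$ with $t_m<t$, which \emph{is} observed (it lies in $\mathcal{W}_{\bar j\bar i}^{(t-1)}$, or is $w_i$/$w_j$ at an earlier time, which by the definition of directed information conditioning is also in $\theta$), so the collider is unblocked; in the corrupt-neighbor case one instead routes the collider through the neighbor's $u$-node which has a descendant in $\theta$. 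Finally, non-violation of P\ref{case:end}) ensures $v_{k-1}\to j\in A$ when $j\notin Z$, so the last arc lifts to $y_{v_{k-1}}[t_{k-1}]\to y_j[t]$ with $t_{k-1}<t$, and the penultimate node, being an interior non-collider or handled above, is not in $\theta$; when $j\in Z$, $w_j=u_j$ and the last arc is $\cdots\to u_j[t]$, with no constraint from P\ref{case:end}).

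The main obstacle I anticipate is the bookkeeping at the corrupted nodes: at each $v_m\in Z$ one has two physical vertices $y_{v_m}[t_m]$ and $u_{v_m}[t_m]$ in $G'_Z$, and the lifted trail must thread through the right one (the hidden $y$-copy to preserve a non-collider, the visible $u$-copy to activate a collider or to supply an observed descendant), while simultaneously (a) keeping all time-stamps strictly below $t$, (b) respecting that $u$-to-$u$ and $y$-to-$y$ arcs only go strictly forward in time, and (c) making sure the instantaneous arc $y_{v_m}[t_m]\to u_{v_m}[t_m]$ is available --- which is precisely where hypotheses B\ref{c3}) and B\ref{c4}) enter. Getting a uniform, case-free description of this lifting, and checking that the two roles a corrupted node might play never conflict along a single trail, is the delicate part; everything else is the same active-trail casework as in Theorem~\ref{thm:PG} run in reverse.
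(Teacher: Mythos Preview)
Your overall plan---lift the trail $trlG$ in $G$ to a trail in $G'_Z$ by assigning time-stamps via C\ref{c1}), attach the endpoint gadgets when $i$ or $j$ is in $Z$, then verify activity case by case---is exactly the paper's approach. Your handling of interior non-colliders is also correct: by P\ref{case:notCollider}) such $v_m\in Z$, so routing through the hidden copy $y_{v_m}[t_m]$ keeps the node out of $\theta$.

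The collider case, however, is mishandled. You propose that when $v_m\in Z$ is a collider, ``the lifted collider node can be taken to be $u_{v_m}[t_m]$, which is observed.'' But in $G'_Z$ there is no arc from $y_{v_{m-1}}[\cdot]$ or $y_{v_{m+1}}[\cdot]$ into $u_{v_m}[\cdot]$: the generative arcs land in $y_{v_m}$, and the only arcs into $u_{v_m}[t]$ come from $y_{v_m}^{(t)}$ and $u_{v_m}^{(t-1)}$. So you cannot place the collider at $u_{v_m}$. The clean fix---and what the paper does---is to keep the entire backbone trail in the $y$-layer, $y_{v_1}[t_1]-\cdots-y_{v_k}[t_k]$, and leave the collider at $y_{v_m}[t_m]$. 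Activity then follows because C\ref{c2}) gives $y_{v_m}[t_m]\to u_{v_m}[t_m+k_m]$ with $u_{v_m}[t_m+k_m]=w_{v_m}[t_m+k_m]\in\theta$ (choose $t$ large enough that $t_m+k_m<t$ for every such $m$); so the collider has an observed \emph{descendant}, which suffices. Similarly, when $v_m\notin Z$ is a collider you do not need the corrupt neighbor at all: $y_{v_m}[t_m]=w_{v_m}[t_m]$ is itself in $\theta$. With this correction the bookkeeping you worry about disappears: the only places the trail enters the $u$-layer are the two endpoints (when $i\in Z$ or $j\in Z$), and B\ref{c3})/B\ref{c4}) are used only to realize the terminal arc into $w_j[t]$ when $j\in Z$. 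Finally, note that $w_i^{(t-1)}$ is \emph{not} part of $\theta$, so your parenthetical ``or is $w_i$ at an earlier time'' would not help; fortunately it is never needed once the backbone stays in the $y$-layer.
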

\begin{proof}
The proof is given in appendix ~\ref{appx:relax}.
\end{proof}
Under the following assumption we can in fact show that $I(w_i \to w_j \parallel
w_{\bar{j}\bar{i}})>0$ is also a necessary condition for $i\to j\in A_Z$ as showin in Corollary ~\ref{cor:PGtoDI}. 
\begin{assumption}\label{assume:faithfulBN}We assume that the generative model in ~\eqref{eq:genmodel} and the perturbation model in ~\eqref{eq:corruptionModel} are such that the corresponding DBN and PDBN are faithful Bayesian networks. Moreover, we consider positive joint distributions for the random processes $Y$ and $U$. 
\end{assumption}
\begin{corollary}\label{cor:PGtoDI}
\it
Under assumption \ref{assume:faithfulBN} and dynamics as described in Assumption \ref{assume:relax}, if $i\to j\in A_Z$, then $I(w_i \to w_j \parallel
w_{\bar{j}\bar{i}})>0$.
\end{corollary}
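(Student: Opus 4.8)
The plan is to derive the corollary directly from Theorem~\ref{thm:PGrelaxed} and the faithfulness hypothesis, after first rewriting the conditional directed information as a finite sum of conditional mutual informations. From the definition of $I(w_i\to w_j\parallel\mathcal{W}_{\bar j\bar i})$ and the product form of the two causally conditioned laws, splitting the logarithm of the ratio gives
\[
I(w_i\to w_j\parallel\mathcal{W}_{\bar j\bar i})=\sum_{t=1}^{T} I\!\left(w_j[t];\,w_i^{(t-1)}\ \middle|\ w_j^{(t-1)},\mathcal{W}_{\bar j\bar i}^{(t-1)}\right),
\]
where the $t$-th summand is recognized as the conditional mutual information between $w_j[t]$ and the history $w_i^{(t-1)}$ given $\{w_j^{(t-1)},\mathcal{W}_{\bar j\bar i}^{(t-1)}\}$. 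Each summand is nonnegative, and since Assumption~\ref{assume:faithfulBN} posits positive joint distributions, all conditional densities exist and every term is well defined and finite; the horizon $T$ is finite, so there are no convergence issues. Hence it suffices to exhibit one strictly positive summand.

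To produce it, apply Theorem~\ref{thm:PGrelaxed}: because the dynamics satisfy Assumption~\ref{assume:relax} and $i\to j\in A_Z$, there is some $t^{*}>0$ and a trail in $G'_Z$ between a node $w_i[t_1]\in w_i^{(t^{*}-1)}$ and $w_j[t^{*}]$ that is active given $\theta:=\{w_j^{(t^{*}-1)},\mathcal{W}_{\bar j\bar i}^{(t^{*}-1)}\}$. The presence of this active trail means $w_i[t_1]$ and $w_j[t^{*}]$ are not d-separated by $\theta$ in $G'_Z$, and since $G'_Z$ is a faithful Bayesian network (Assumption~\ref{assume:faithfulBN}) this yields $w_i[t_1]\not\indep w_j[t^{*}]\mid\theta$. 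As $w_i[t_1]$ is one coordinate of the vector $w_i^{(t^{*}-1)}$, conditional independence of the full history from $w_j[t^{*}]$ would force conditional independence of each coordinate, so by marginalization $w_i^{(t^{*}-1)}\not\indep w_j[t^{*}]\mid\theta$. A conditional mutual information vanishes exactly when the corresponding conditional independence holds; with positive densities the failure of independence makes the Kullback--Leibler term strictly positive on a positive-probability event, so $I\!\left(w_j[t^{*}];w_i^{(t^{*}-1)}\mid w_j^{(t^{*}-1)},\mathcal{W}_{\bar j\bar i}^{(t^{*}-1)}\right)>0$. Together with the nonnegativity of the other summands this gives $I(w_i\to w_j\parallel\mathcal{W}_{\bar j\bar i})>0$.

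I expect the only real obstacle to be bookkeeping rather than a new idea. One must check that the conditioning set in Theorem~\ref{thm:PGrelaxed}, namely $\{w_j^{(t-1)},\mathcal{W}_{\bar j\bar i}^{(t-1)}\}$, matches the conditioning set of the $t$-th conditional-mutual-information term in the decomposition above (it does, verbatim), and one must carefully justify the passage from the single-coordinate dependence $w_i[t_1]\not\indep w_j[t^{*}]\mid\theta$ to the full-history dependence $w_i^{(t^{*}-1)}\not\indep w_j[t^{*}]\mid\theta$. All of the genuinely delicate work---the construction of the active trail from the definition of the perturbed graph under Assumption~\ref{assume:relax}---has already been discharged in Theorem~\ref{thm:PGrelaxed}, so this corollary reduces to a short deduction from it.
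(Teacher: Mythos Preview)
Your proposal is correct and follows essentially the same route as the paper: invoke Theorem~\ref{thm:PGrelaxed} to obtain an active trail in $G'_Z$ for some $t$, use faithfulness (Assumption~\ref{assume:faithfulBN}) to convert this into the failure of the conditional independence $w_j[t]\indep w_i^{(t-1)}\mid\{w_j^{(t-1)},\mathcal{W}_{\bar j\bar i}^{(t-1)}\}$, and conclude that the directed information is strictly positive. Your decomposition of the directed information into a sum of conditional mutual informations and the explicit lift from the single-coordinate dependence $w_i[t_1]\not\indep w_j[t^*]\mid\theta$ to the full-history dependence are details the paper leaves implicit, but they do not change the argument.
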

\begin{proof}
By theorem \ref{thm:PGrelaxed}, if $i\to j \in A_Z$, then there exists an trail in PDBN  between $w_i^{(t-1)}$ and $w_j[t]$ that is active given 
$\{w_j^{(t-1)},\mathcal{W}_{\bar{j}\bar{i}}^{(t-1)}\}$ in $G'_Z,$ for some $t>0$. Under faithfulness assumption, this implies $P_{w_j[t]\mid w_j^{(t-1)},w_i^{(t-1)},\mathcal{W}_{\bar{j}\bar{i}}^{(t-1)}}\ne P_{w_j[t]\mid w_j^{(t-1)},\mathcal{W}_{\bar{j}\bar{i}}^{(t-1)}}$. Thus, $I(w_i \to w_j \parallel
\mathcal{W}_{\bar{j}\bar{i}})>0$. 
\end{proof}
\section{Estimation of Directed information}\label{sec:DIest}
In \cite{jiao2013universal}, consistency results for estimating directed information(DI) between a pair of random processes from data was proposed. However, in this article we extend the methods to determine the directed information between two processes conditioned on a set of other random processes. We provide consistency results of the estimator by showing convergence in almost sure sense(denoted as P-a.s). 
\subsection{Pairwise Estimation of Directed Information}\label{subsec:reviewJiao}
Here, we present the definition used for directed information estimator proposed in \cite{jiao2013universal}  Before that, the following notion of universal probability assignment is needed. 
\subsubsection{Universal Probability Assignment}\label{subsec:defineQ}
Let $Q$ denote a sequential probability assignment for a sequence $x$. That is, the conditional probability mass function(pmf) for $x[i]$ given $x^{(i-1)}$ is given by $Q(x[i]\mid x^{(i-1)})$. The joint pmf for $x^{(n)}$ is given by $Q(x^{(n)})=Q(x[0])Q(x[1]\mid x[0])Q(x[2]\mid x^{(1)}])\cdots Q(x[n]\mid x^{(n-1)})$. 
\begin{definition}[Universal Probability Assignment]
Let $P$ be the true joint pmf for $x^{(n)}$. Then, a probability assignment $Q$ is called as \emph{universal} if the following holds:
\begin{equation}
\lim _{n\to \infty}\frac{1}{n}\mean\left[ \log \frac{P(x^{(n)})}{Q(x^{(n)})}\right]=0.
\end{equation}
\end{definition}
Context tree weighting(CTW) algorithm developed by \cite{willems1995theCTW} will be used for computing sequential probability assignment. 
\subsubsection{DI estimation}\label{subsec:DIest}
Let $X$ and $Y$ be jointly stationary and  ergodic processes. The directed information from $X$ to $Y$ can be expressed in terms of the entropy as follows:
\begin{equation}
I(X\to Y)=H(Y)-H(Y\parallel X)
\end{equation}
where $H(Y)=\mean [-\log P(Y)]$ and $H(Y\parallel X)=\mean [-\log P(Y\parallel X)]$ denotes the entropy of $Y$ and the causally conditioned entropy \cite{kramer1998thesis} respectively. 

The directed information rate (DIR) from $X$ to $Y$ is defined as:
\begin{equation}
I_r(X\to Y)=\lim _{n\to \infty}\frac{1}{n}I(X^{(n)}\to Y^{(n)}).
\end{equation}
Let $H_r(Y):=\lim _{n\to \infty}\frac{1}{n}H(Y^{(n)})$ and let $H_r(Y\parallel X):=\lim _{n\to \infty}\frac{1}{n}H(Y^{(n)}\parallel X^{(n)})$. Thus, if $H_r(Y)$ and $H_r(Y\parallel X)$ converge, then $I_r$ is convergent. That is,
\begin{equation}
I_r=H_r(Y)-H_r(Y\parallel X).
\end{equation}

In \cite{jiao2013universal}, the following DIR estimator was defined:
\begin{multline}\label{eq:defineI}
\hat I(X^{(n)}\to Y^{(n)}) =\frac{1}{n}\sum _{i=1}^n \sum _{y[i]}Q(y[i]\mid X^{(i-1)},Y^{(i-1)})\cdot\\  \log \frac{1}{Q(y_i\mid Y^{(i-1)})}
\\  -\frac{1}{n}\sum _{i=1}^n\sum _{y_i}Q(y[i]\mid X^{(i-1)},Y^{(i-1)})\cdot \\ \log \frac{1}{Q(y[i]\mid X^{(i-1)},Y^{(i-1)})}
\end{multline}
We will extend the above to define conditional directed information as described in the following subsection.
\subsection{Estimation of Conditional Directed Information}\label{subsec:di_ext}
Let $X,Y,Z$ be jointly stationary and ergodic processes. The conditional directed information from $X$ to $Y$ conditioned on $Z$ can be expressed in terms of the entropy as follows:
\begin{equation}
I(X\to Y\parallel Z)=H(Y\parallel Z)-H(Y\parallel X,Z).
\end{equation}

The causally conditioned directed information rate (DIR) from $X$ to $Y$ now is defined as:
\begin{equation}
I_r(X\to Y\parallel Z)=\lim _{n\to \infty}\frac{1}{n}I(X^{(n)}\to Y^{(n)}\parallel Z^{(n)}).
\end{equation}
Let $H_r(Y\parallel X,Z):=\lim _{n\to \infty}\frac{1}{n}H(Y^{(n)}\parallel X^{(n)},\ Z^{(n)})$. Thus, if $H_r(Y\parallel Z)$ and $H_r(Y\parallel X,Z)$ converge, then $I_r$ is convergent. That is,
\begin{equation}
I_r=H_r(Y\parallel Z)-H_r(Y\parallel X,Z).
\end{equation}
The conditional directed information estimator $\hat I(X^{(n)}\to Y^{(n)}\parallel Z^{(n)})$ is defined as under:
\begin{multline}\label{eq:conditionalDI}
\hat I(X^{(n)}\to Y^{(n)}\parallel Z^{(n)}) =\\ \frac{1}{n}\sum _{i=1}^n\sum _{y[i]}Q(y[i]\mid X^{(i-1)},Y^{(i-1)},Z^{(i-1)})\cdot \\  \log \frac{1}{Q(y[i]\mid Y^{(i-1)},Z^{(i-1)})}
\\  -\frac{1}{n}\sum _{i=1}^n\sum _{y[i]}Q(y[i]\mid X^{(i-1)},Y^{(i-1)},Z^{(i-1)})\cdot \\ \log \frac{1}{Q(y[i]\mid X^{(i-1)},Y^{(i-1)},Z^{(i-1)})}
\end{multline}
The following theorem establishes the consistency result in estimating conditional DIR as defined in \eqref{eq:conditionalDI}. The proof is given in appendix ~\ref{appx:conditional}.
\begin{theorem}\label{thm:convergeConditionalI}
Let $Q$ be the probability assignment in the CTW algorithm. Suppose, $X,Y,Z$ are jointly stationary irreducible aperiodic finite-alphabet Markov processes whose order is bounded by the prescribed tree depth of the CTW algorithm. Then,
\begin{equation}\label{eq:condIconverge}
\lim _{n\to \infty} \hat I(X^{(n)}\to Y^{(n)}\parallel Z^{(n)})=I_r(X\to Y\parallel Z) \qquad \mbox{P-a.s},
\end{equation}
\end{theorem}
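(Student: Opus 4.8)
The plan is to extend the almost-sure convergence argument of \cite{jiao2013universal} for pairwise directed information to the conditional estimator \eqref{eq:conditionalDI}. Abbreviating the strict past by $\mathcal F_{i-1}:=(X^{(i-1)},Y^{(i-1)},Z^{(i-1)})$, write $\hat I(X^{(n)}\to Y^{(n)}\parallel Z^{(n)})=A_n-B_n$, where $B_n:=\frac1n\sum_{i=1}^n\sum_{y}Q(y\mid\mathcal F_{i-1})\log\frac1{Q(y\mid\mathcal F_{i-1})}$ is the running average of the CTW-predictive conditional entropies of $Y$, and $A_n:=\frac1n\sum_{i=1}^n\sum_{y}Q(y\mid\mathcal F_{i-1})\log\frac1{Q(y\mid Y^{(i-1)},Z^{(i-1)})}$ the corresponding running average of cross-entropies, the inner sum ranging over the finite alphabet of $Y$. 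Since it has already been shown that $I_r(X\to Y\parallel Z)=H_r(Y\parallel Z)-H_r(Y\parallel X,Z)$, it suffices to establish the two P-a.s. limits $B_n\to H_r(Y\parallel X,Z)$ and $A_n\to H_r(Y\parallel Z)$.

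The tool behind both is the universality of the CTW probability assignment \cite{willems1995theCTW} over finite-alphabet Markov sources whose order is at most the prescribed tree depth. Two consequences are needed. First, for such a source the pointwise redundancy bound $\log\frac1{Q(\cdot^{(n)})}\le\log\frac1{P(\cdot^{(n)})}+\gamma\log n+c$, together with a Markov-inequality-plus-Borel--Cantelli estimate for the nonnegative mean-one $P$-martingale $Q(\cdot^{(n)})/P(\cdot^{(n)})$ (which gives $\log\frac1{Q(\cdot^{(n)})}\ge\log\frac1{P(\cdot^{(n)})}-2\log n$ eventually), makes the normalized CTW log-loss agree with that of $P$, so that $\frac1n\log\frac1{Q(\cdot^{(n)})}$ converges, by Shannon--McMillan--Breiman, to the true entropy rate, P-a.s. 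Second, this in turn forces P-a.s. consistency of the one-step CTW predictors: $\big\|Q(\cdot\mid\text{past})-P(\cdot\mid\text{past})\big\|_1\to 0$, the underlying KT estimators at each context converging by the ergodic theorem and CTW's weighting concentrating on the true-order model. We invoke these for the CTW instance run on $(X,Y,Z)$ and for the one run on $(Y,Z)$ separately.

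For $B_n$: by predictor consistency, the $i$th summand $H\big(Q(\cdot\mid\mathcal F_{i-1})\big)$ differs from $H\big(P(\cdot\mid\mathcal F_{i-1})\big)$ by a P-a.s. null amount, since the entropy functional is uniformly continuous on the finite simplex; hence the running average of the latter has the same limit. Under the order-$d$ Markov hypothesis $H\big(P(\cdot\mid\mathcal F_{i-1})\big)=H(Y[i]\mid\mathcal F_{i-1})$ is a fixed bounded function of the $d$ most recent values of $(X,Y,Z)$, so the ergodic theorem yields $B_n\to\mean_P\big[H(Y[d{+}1]\mid X^{(d)},Y^{(d)},Z^{(d)})\big]=H_r(Y\parallel X,Z)$. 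The argument for $A_n$ is parallel, the one extra point being that its summand is a cross-entropy mixing the two CTW instances; here irreducibility and aperiodicity keep the true conditional $P(y\mid Y^{(i-1)},Z^{(i-1)})$ bounded away from zero on its support, so the $\log$ factors remain bounded, and consistency of both predictors reduces the $i$th summand --- up to a P-a.s. null running-average error --- to $\sum_y P(y\mid\mathcal F_{i-1})\log\frac1{P(y\mid Y^{(i-1)},Z^{(i-1)})}$, whose running-average limit the ergodic theorem and the tower rule identify as $\mean_P\big[-\log P(Y[d{+}1]\mid Y^{(d)},Z^{(d)})\big]=H_r(Y\parallel Z)$. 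Subtracting, $A_n-B_n\to I_r(X\to Y\parallel Z)$, which is the claim.

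The main obstacle is upgrading convergence in mean --- which is immediate from the $O(\log n)$ CTW redundancy --- to almost-sure convergence, because the CTW predictive probabilities can be as small as $\Theta(1/i)$, so the log-losses $\log\frac1{Q(\cdot\mid\text{past})}$ grow like $O(\log i)$ rather than staying bounded along the trajectory. This is exactly what the martingale/Borel--Cantelli inputs are tailored to absorb: the relevant deviation sums have increments of order $\log i$, and the summability $\sum_i(\log i)^2/i^2<\infty$ lets the martingale strong law annihilate the normalized sum. It is also the reason the ergodic-theorem step requires the bounded-order Markov hypothesis (with a fixed CTW depth a bare stationary-ergodic source would not do, and that step would have to be replaced by a reverse-martingale running-average argument). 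Verifying that all of these error terms are genuinely $o(n)$ P-a.s., and that the two CTW instances can be treated independently, is the bulk of the work; the decomposition $\hat I=A_n-B_n$ into two running averages, each converging to a conditional-entropy rate, is what organizes it.
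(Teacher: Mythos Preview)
Your proposal is correct and follows essentially the same architecture as the paper's proof: split $\hat I$ into the two running averages $A_n$ and $B_n$ (the paper's T1 and T2), show each converges P-a.s.\ to the corresponding conditional-entropy rate via CTW predictor consistency plus a Ces\`aro step and the ergodic theorem, and subtract. The only difference is packaging: the paper imports the a.s.\ predictor consistency and the $\hat H(Y\parallel V)\to H_r(Y\parallel V)$ convergence as black-box results from \cite{jiao2013universal} (its Lemma~\ref{lem:convergeQ} and Proposition~\ref{prop:Hconverge}), whereas you sketch the redundancy/Borel--Cantelli/martingale arguments that underlie them.
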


\subsection{Simulation Results}\label{sec:sim}
To verify the predictions of Theorem ~\ref{thm:PG}, we first performed a simulation on a network consisting of 3 nodes with a single node being perturbed and on a network consisting of 6 nodes, of which 2 are corrupt. We estimate the directed information rates (DIR), which are DI estimates that are averaged along the sequence length till the horizon. We used the estimator described in \eqref{eq:conditionalDI} to compute DIR. For both the networks, the horizon length is chosen as $10^4$. The DIR estimates were
then averaged over $50$ trials.
\begin{figure}[t]
  \centering
  \vspace{2mm} 
  \subcaptionbox{\label{fig:Col} Ideal Measurements $Y$}{\begin{subfigure}{0.48\columnwidth}
    \centering
        \begin{tikzpicture}[scale=0.25]
                \tikzstyle{vertex}=[circle,fill=none,minimum size=10pt,inner sep=0pt,thick,draw]
        \tikzstyle{pvertex}=[star,star points=10,fill=white,minimum size=10pt,inner sep=0pt,thick,draw]
          \node[vertex] (n1) {$1$};
          \node[vertex, right of=n1] (n2) {$2$};
          \node[vertex,right of=n2] (n3) {$3$};
                    
          \draw[->,thick] (n1)--(n2);
          \draw[->,thick] (n3)--(n2);    
        \end{tikzpicture}
        
  \end{subfigure}}
    \subcaptionbox{
        \label{fig:perturb 3} Unreliable Measurements $U$.
        }{\begin{subfigure}{0.49\columnwidth}
    \centering
        \begin{tikzpicture}[scale=0.25]
                \tikzstyle{vertex}=[circle,fill=none,minimum size=10pt,inner sep=0pt,thick,draw]
        \tikzstyle{pvertex}=[star,star points=10,fill=white,minimum size=10pt,inner sep=0pt,thick,draw]
          \node[vertex] (n1) {$1$};
          \node[vertex, right of=n1] (n2) {$2$};
          \node[pvertex,right of=n2] (n3) {$3$};

          \draw[-{stealth[scale=3.0]},thick] (n1)--(n2);
          \draw[-{stealth[scale=3.0]},thick] (n3)--(n2);
          \draw[-{stealth[scale=3.0]},red,thick,dashed] (n2)to[out=285,in=290](n3);
          \draw[-{stealth[scale=3.0]},red,thick,dashed] (n1) to[out=40,in=140] (n3);
        \end{tikzpicture}
          \end{subfigure}}
          \subcaptionbox{\label{fig:diCol}
  Comparison of directed information estimates between perfect measurements and corrupted data-streams. DIR $I$ is shown along X-axis and the sample length $n$ is along Y-axis.}{\begin{subfigure}{0.8\columnwidth}
   \includegraphics[height=5cm,width=7.8cm]{./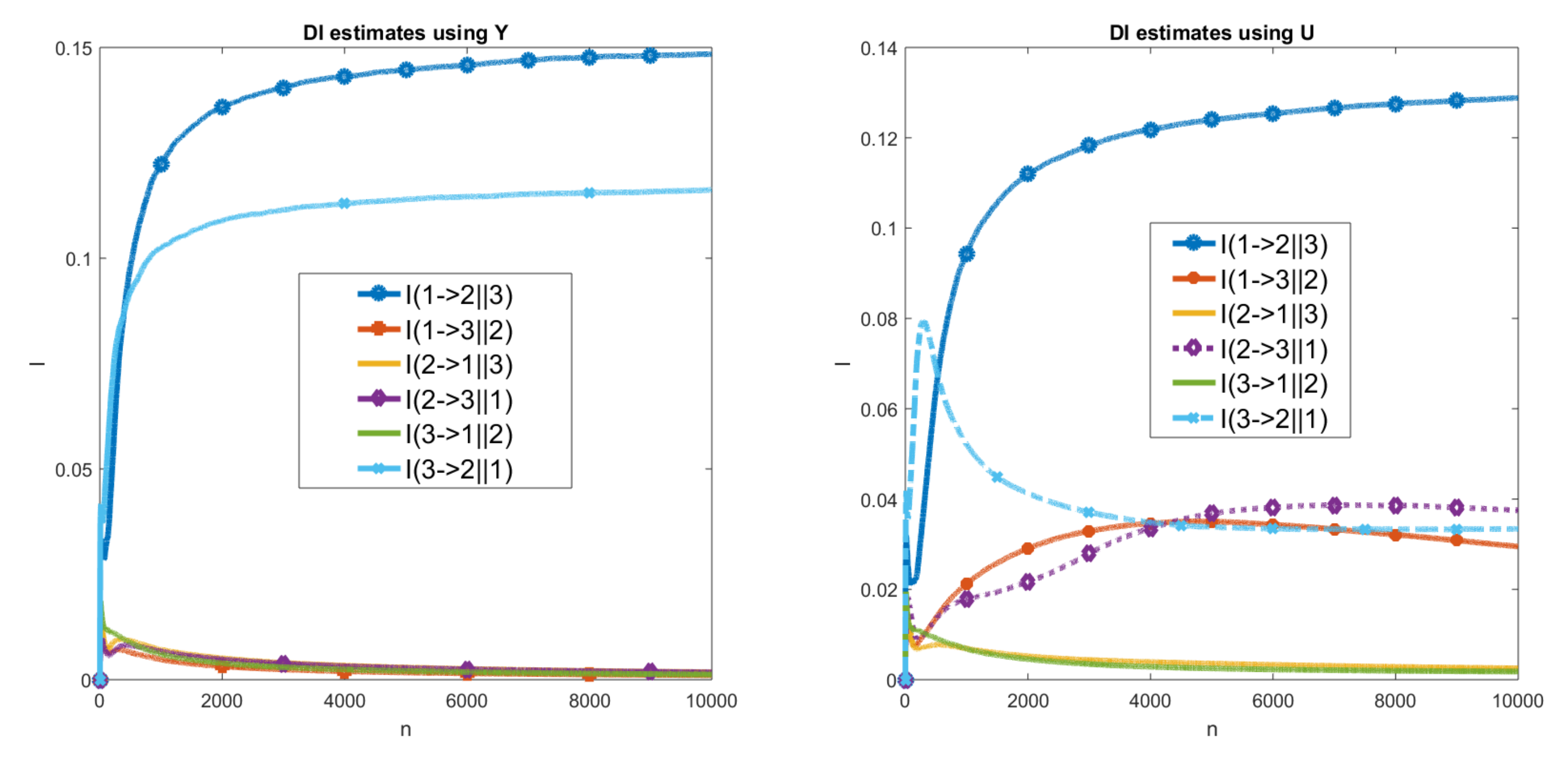}
    \end{subfigure}}
        \caption{
    \label{fig:singlePert} This figure shows how unreliable measurements at node 3 can result in spuriously inferring a direct dynamic influence of node $1$ on the third and a spurious influence of node $2$ on node $3$. 
  }
\end{figure}
\subsubsection{Single node Perturbation}
Consider a network consisting of 2 nodes with a common child as shown in Fig. ~\ref{fig:Col}). The dynamic interactions in the true generative model are as follows:
\begin{eqnarray*}
y_1[t]&=&e_1[t],\\
y_2[t]&=&y_1[t-1]+y_3[t-1]+e_2[t],\\
y_3[t]&=&e_3[t]
\end{eqnarray*} where $e_1[t] \sim $ Bernouilli(0.7), $e_2[t] \sim$ Bernouilli(0.4) and $e_3[t] \sim$ Bernouilli(0.6) and `$+$' is logical `OR' operation. Each of $y_1[t],y_2[t]$ and $y_3[t]$ has a finite alphabet $\{0,1\}.$

The perturbation considered here is the time-origin uncertainty at node 3. The corruption model takes the form:
\begin{equation*}
u_3[t] = \begin{cases}
  y_3[t-2], & \textrm{ with probability } 0.5 \\
  y_3[t], & \textrm{ with probability } 0.5.
\end{cases}
\end{equation*} 
The perturbed graph predicted by Theorem ~\ref{thm:PG} is shown in Fig. ~\ref{fig:perturb 3}). The DIR estimates from ideal ($Y$) and unreliable measurements ($U$) are shown in Fig. ~\ref{fig:diCol}). We observe non-zero DIR estimates and add edges to $G_Z$ respectively. In particular, note the substantial rise in $I(u_1\to u_3\parallel u_2)$ and in $I(u_2\to u_3\parallel u_1)$. This indicates the presence of spurious links $1\to 3$ and $2\to 3$ in the inferred perturbed graph. 

\subsubsection{Multiple Perturbation}
\begin{figure}[t]
\centering
\subcaptionbox{\label{fig:H} 
    True generative graph.}
    {\begin{subfigure}{0.49\columnwidth}
    \centering
\begin{tikzpicture}[scale=0.65]
 \tikzstyle{vertex}=[circle,fill=none,minimum size=10pt,inner sep=0pt,thick,draw]
 \tikzstyle{pvertex}=[star,star points=14,fill=white,minimum size=10pt,inner sep=0pt,thick,draw]
  \node[vertex] (n1) {$1$};

  \node[vertex] (n2) at ($(n1)+ (4em,0em)$) {$4$};%

        \node[vertex] (n3) at ($(n1) - (0em,3.5em)$) {$2$};%

    \node[vertex] at ($(n2)-(0,3.5em)$) (n4) {$5$};%
  \node[vertex] at ($(n3)-(0,3.5em)$) (n5){$3$};
  \node[vertex] at ($(n4)-(0,3.5em)$) (n6){$6$};
  \draw[->,thick] (n1)--(n3);
  \draw[->,thick] (n2)--(n4);
  \draw[->,thick] (n4)--(n6);
  \draw[->,thick] (n3)--(n4);
  \draw[->,thick] (n3)--(n5);
  
\end{tikzpicture}
\end{subfigure}}
\subcaptionbox{
   \label{fig:pgH} 
    Network inferred from corrupt data-streams at nodes 2 and 5}
    {\begin{subfigure}{0.49\columnwidth}
    \centering
  \begin{tikzpicture}[scale=0.75]
 \tikzstyle{vertex}=[circle,fill=none,minimum size=10pt,inner sep=0pt,thick,draw]
 \tikzstyle{pvertexB}=[star,star points=10,minimum size=10pt,inner sep=0pt,thick,draw,fill=white]
  \tikzstyle{pvertexG}=[star,star points=10,minimum size=10pt,inner sep=0pt,thick,draw,fill=white]
  \node[vertex] (n1) {$1$};

  \node[vertex] (n2) at ($(n1)+ (4em,0em)$) {$4$};%

        \node[pvertexB] (n3) at ($(n1) - (0em,3.5em)$) {$2$};%

    \node[pvertexG] at ($(n2)-(0,3.5em)$) (n4) {$5$};%
  \node[vertex] at ($(n3)-(0,3.5em)$) (n5){$3$};
  \node[vertex] at ($(n4)-(0,3.5em)$) (n6){$6$};
  \draw[->,thick] (n1)--(n3);
  \draw[->,thick] (n2)--(n4);
  \draw[->,thick] (n4)--(n6);
  \draw[->,thick] (n3)--(n4);
  \draw[->,thick] (n3)--(n5);
  
\draw[-{stealth[scale=3.0]},red,thick,dashed] (n1) -- (n4);
\draw[-{stealth[scale=3.0]},red,thick,dashed] (n1) -- (n6);
\draw[-{stealth[scale=3.0]},red,thick,dashed] (n2)--(n3) ; 
\draw[-{stealth[scale=3.0]},red,thick,dashed] (n2) -- (n5)  ;
\draw[-{stealth[scale=3.0]},red,thick,dashed] (n4) -- (n5)  ;
\draw[-{stealth[scale=3.0]},red,thick,dashed] (n5) -- (n4)  ;
\draw[-{stealth[scale=3.0]},red,thick,dashed] (n5) -- (n6)  ;
\draw[-{stealth[scale=3.0]},red,thick,dashed] (n6) -- (n5)  ;
\draw[-{stealth[scale=3.0]},red,thick,dashed] (n3) -- (n6)  ;
\draw[-{stealth[scale=3.0]},red,thick,dashed] (n6) -- (n3)  ;
\draw[-{stealth[scale=3.0]},red,thick,dashed] (n5) to[out=40,in=320] (n3);
\draw[-{stealth[scale=3.0]},red,thick,dashed] (n6) to[out=40,in=320] (n4);
\draw[-{stealth[scale=3.0]},red,thick,dashed] (n4) to[out=120,in=30] (n3);
\draw[-{stealth[scale=3.0]},red,thick,dashed] (n1) to[out=220,in=130,bend right=75] (n5);
\draw[-{stealth[scale=3.0]},red,thick,dashed] (n2) to[out=220,in=130,bend left=75] (n6);
\end{tikzpicture}
 \end{subfigure}}
\subcaptionbox{
    \label{fig:di12} Comparison of directed information rate (DIR) estimates for links from nodes $1$ and $2$, between ideal data-streams $Y$ and uncertain measurements $U$. DIR $I$ is shown along X-axis and the sample length $n$ is along Y-axis.
  }{\begin{subfigure}{0.9\columnwidth}
  \centering
   \includegraphics[height=5cm,width=8cm]{./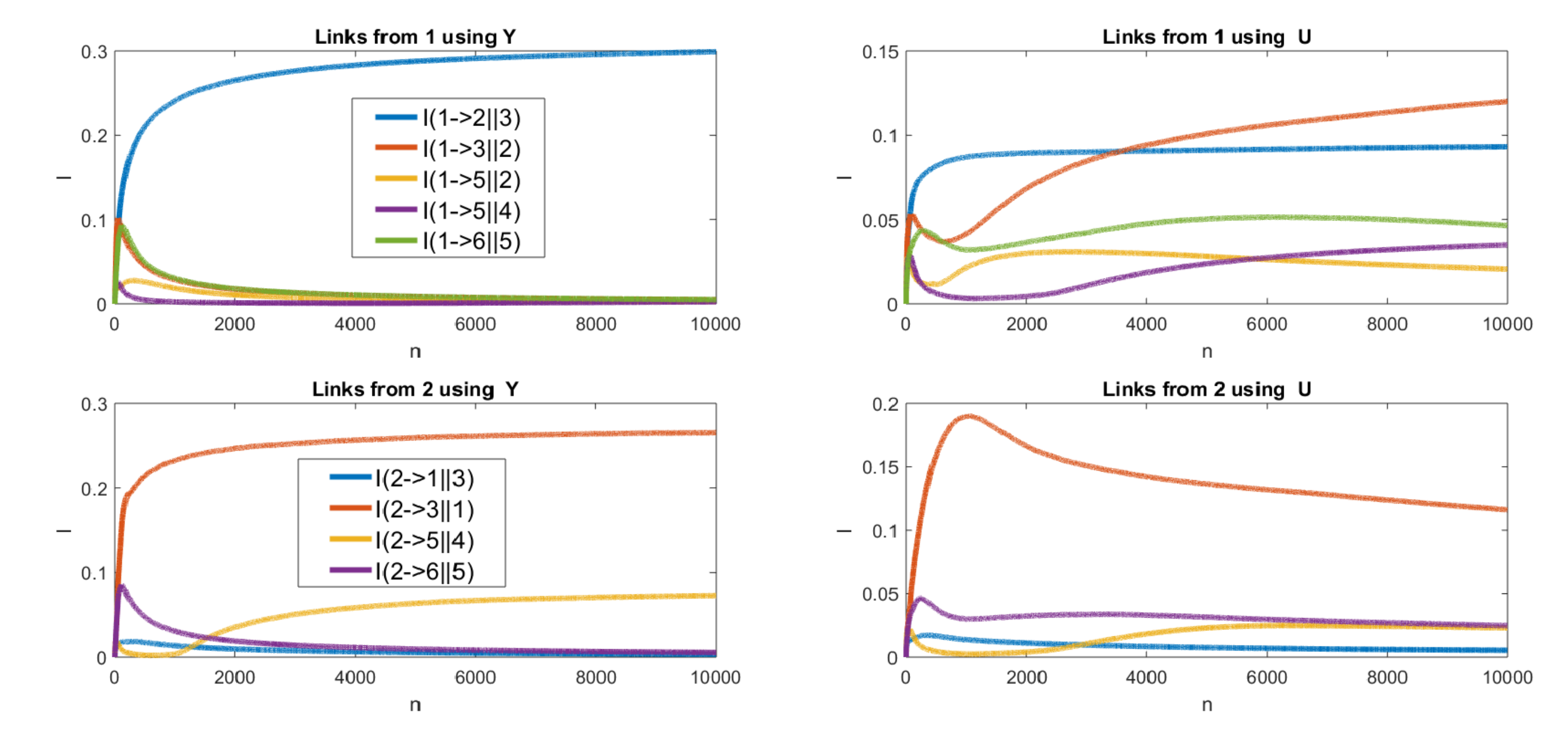}
    \end{subfigure}
  }
    \caption{\label{fig:trueHDI12}~\ref{fig:H}) shows true generative
      graph. ~\ref{fig:di12}) depicts DIR  estimates to detect links
      from nodes $1$ and $2$ using ideal measurements $Y$ and when
      there is corruption at nodes $2$ and $5$. It can be observed
      that many spurious links are detected. 
  }
\end{figure}
Consider a network of $6$ nodes as shown in Fig. ~\ref{fig:H}).  The dynamic interactions in the true generative model are as follows:
\begin{eqnarray*}
y_1[t]&=&e_1[t],\\
y_2[t]&=&y_1[t-1]+e_2[t],\\
y_3[t]&=&y_2[t-1]+e_3[t],\\
y_4[t]&=&e_4[t],\\
y_5[t]&=&(y_2[t-1]+y_4[t-1])\cdot e_5[t],\\
y_6[t]&=&y_5[t-1]+e_6[t]
\end{eqnarray*} where $e_1[t] \sim$ Bernouilli(0.55), $e_2[t] \sim$ Bernouilli(0.5), $e_3[t] \sim$ Bernouilli(0.2), $e_4[t] \sim$ Bernouilli(0.4), $e_5[t] \sim $ and $e_6[t] \sim$ Bernouilli(0.3)and `$+$' is logical `OR' operation while `$\cdot$' is logical `AND' operation. Each of $y_1[t],y_2[t],\dots ,y_6[t]$ has a finite alphabet $\{0,1\}.$
\begin{figure}[t]
\centering
\subcaptionbox{\label{fig:di34} A comparison of DIR estimates to detect links from nodes $3$ and $4$ using ideal measurements and when there is corruption at nodes $2$ and $5$ is shown. DIR $I$ is shown along X-axis and the sample length $n$ is along Y-axis. 
  }
{
\begin{minipage}{0.9\columnwidth}
\centering
   \includegraphics[height=4.9cm,width=8cm]{./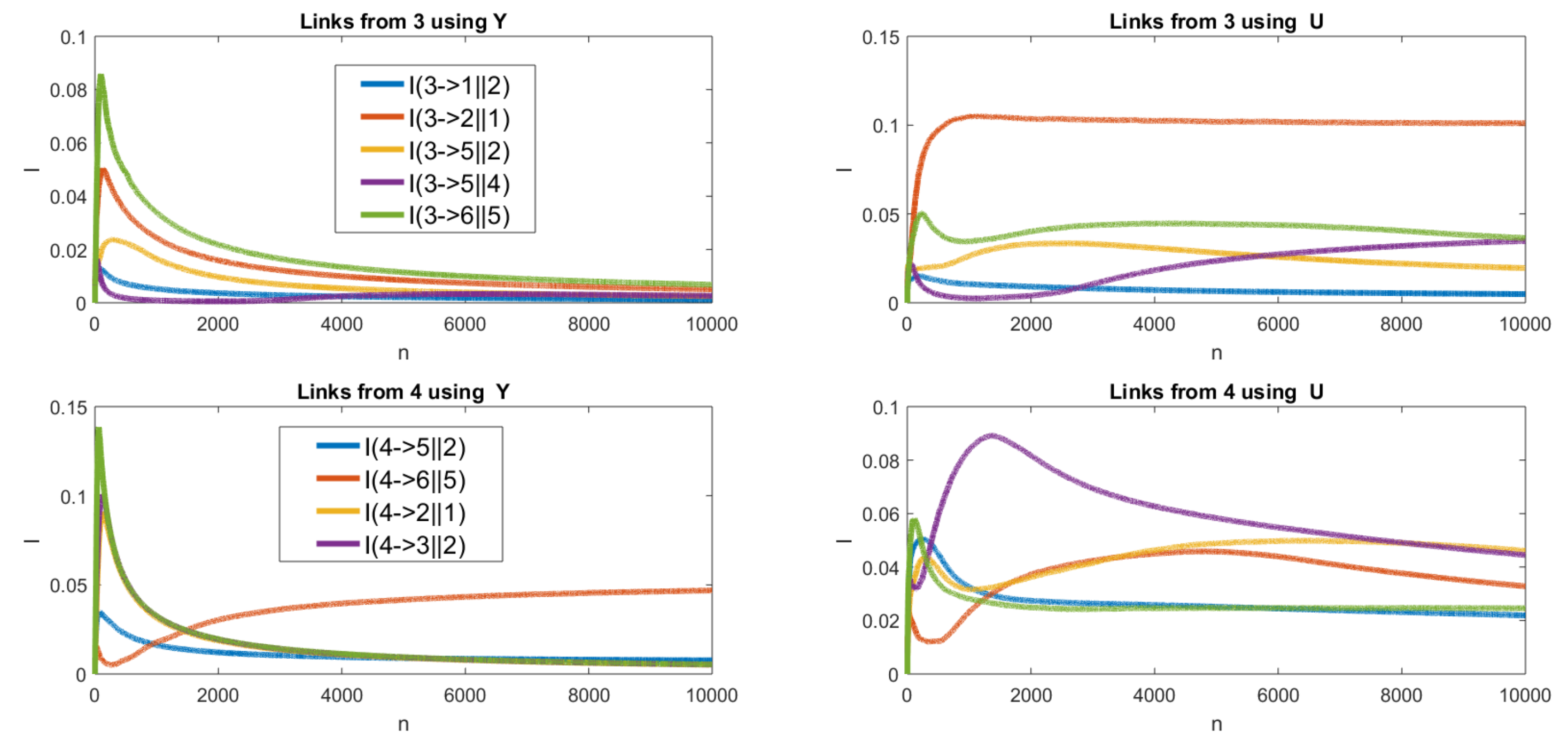}
\end{minipage}}
\subcaptionbox{\label{fig:di34} A comparison of DIR  estimates to detect links from nodes $5$ and $6$ using ideal measurements and when there is corruption at nodes $2$ and $5$ is shown. DIR $I$ is shown along X-axis and the sample length $n$ is along Y-axis.
  }{
\begin{minipage}{0.9\columnwidth}
\centering
   \includegraphics[height=5cm,width=8cm]{./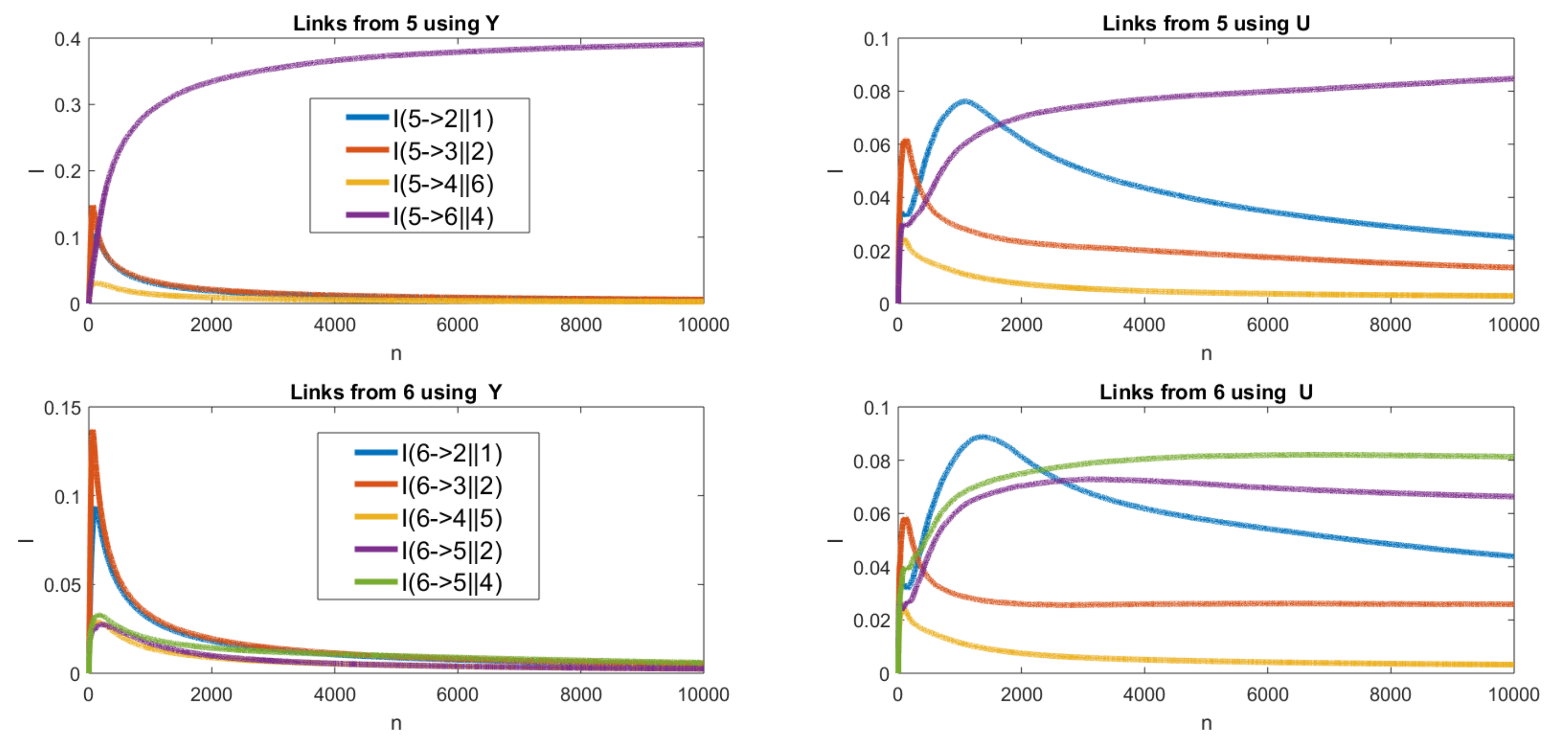}
\end{minipage}
}\caption{\label{fig:di3456}DI estimates to detect links from nodes 3,4,5 and 6. }
\end{figure}
The perturbations considered here are time-origin uncertainties at nodes 2 and 5. The corruption models takes the form:
\begin{equation*}
u_2[t] = \begin{cases}
  y_2[t-2], & \textrm{ with probability } 0.5 \\
  y_2[t], & \textrm{ with probability } 0.5.
\end{cases}
\end{equation*} and 
\begin{equation*}
u_5[t] = \begin{cases}
  y_5[t-2], & \textrm{ with probability } 0.5 \\
  y_5[t], & \textrm{ with probability } 0.5.
\end{cases}
\end{equation*} 
The perturbed graph predicted by Theorem ~\ref{thm:PG} is shown in figure  ~\ref{fig:pgH}). The DIR estimates from ideal ($Y$) and unreliable measurements ($U$) are shown in figures ~\ref{fig:di12}) and ~\ref{fig:di3456}). We observe non-zero DIR estimates and add edges to $G_Z$ respectively. For clarity of visualization, only non-zero DIR estimates that would be predicted by Theorem ~\ref{thm:PG} are shown.
\section{Conclusion}\label{sec:conclude}
We studied the problem of inferring directed graphs for
a large class of networks that admit non-linear and strictly
causal interactions between several agents. We provided necessary and sufficient conditions to determine the directed structure from corrupt data-streams. 
Doing so, we particularly
established that inferring causal structure from corrupt data-streams results
in spurious edges and we precisely characterized the
directionality of such spurious edges. Finally, we provided convergence results for the estimation of conditional directed information that was used to determine the directed structure. Simulation results were
provided to verify the theoretical predictions.  
\subsection*{Future Work}
Currently, the emphasis was on characterizing the effects of data corruption on network inference and determining how spurious probabilistic relations are introduced. Future work will
focus on quantifying the amount of data that is needed to
detect network inter-relationships using directed information. Moreover, the problem of removing spurious edges in the network
reconstructed from corrupt data streams will be addressed in
future. This will play an integral role
preceding system identification on networked systems. 
\appendices
\section{Proof for Theorem \ref{thm:PGrelaxed}}
\label{appx:relax}
Suppose $i\rightarrow j$ is in  $A_Z$. Then there is a trail, $trl_G$, described by 
	$i=v_1-v_2-\cdots-v_k=j$ in $G$ satisfying conditions in Definition \ref{def:pg}. We will first construct a trail in the perturbed DBN, $G'_Z,$ from a node in $w_i^{(t-1)}$ to $w_j[t]$ for some $t>0$. We can construct a trail in $G'_Z$ as follows: for all $l\in \{1,2,\ldots , k-1\}$, set $t_{l}=t_{l+1}-k_{v_{l+1}v_{l}}$ if $v_l\to v_{l+1} $ holds in $trlG$. Otherwise, set $t_{l}=t_{l+1}+k_{v_{l}v_{l+1}}$ if $v_l\gets v_{l+1} $ holds in $trlG$. Such a construction is feasible because by condition C~\ref{c1}), numbers $k_{v_{l+1}v_{l}}$ and $k_{v_{l}v_{l+1}}$ exists for all $l\in \{1,2,\ldots , k-1\}$ and at all times. Thus, we have a trail $y_i[t_1]-y_{v_{2}}[t_2]-y_{v_3}[t_3]-\ldots-y_{v_{k-1}}[t_{k-1}]-y_j[t_k]$. For all $m\in \{1,2,\ldots ,k\}$ if $v_m\in Z$, there exists a number $k_m>0$ following conditions C~\ref{c2}). If B~\ref{c4}) also holds, then $k_m\ge 0$. Let $t>\max\{t_1,\ldots, t_{k-1}\}$, and for all $m\in \{1,2,\ldots ,k\}$ if $v_m\in Z$, let $t>t_m+k_m$ also hold. Depending on whether $i$ or $j$ is a perturbed node, we have four cases on either end of the above trail. 
\begin{enumerate}[A)]
\item Consider the case $i,j\in Z$. As $i\in Z$, using condition C~\ref{c2}) $u_i[t_1+k_i]\leftarrow y_i[t_1]$ holds true. Choose $t$ sufficiently large so that $t>t_1+k_i$ also holds. As $j\in Z$, using C~\ref{c2}), $t$ can be sufficiently large so that we have $y_j[t_k]\rightarrow u_j[t]$ where $t=t_k+k_j$ and $k_j\ge 1$. If B~\ref{c3}) holds, then we can choose $t$ sufficiently large such that at the end of the trail we take $s$ steps from $y_j[t_k]$ to $u_j[t]$ such that the tail is of the form $y_j[t_k]\to y_j[t_k+k'_j]\to \cdots \to y_j[t_k+sk'_j]\to u_j[t]$ with $t=t_k+sk'_j+k_j$. Thus, the constructed trail in $G'_Z$ is either 
	$w_i[t_1+k_i]=u_i[t_1+k_i]\leftarrow y_i[t_1]-y_{v_{2}}[t_2]-y_{v_3}[t_3]-\cdots-y_{v_{k-1}}[t_{k-1}]-y_j[t_k]\rightarrow u_j[t]=w_j[t]$, or $w_i[t_1+k_i]=u_i[t_1+k_i]\leftarrow y_i[t_1]-y_{v_{2}}[t_2]-y_{v_3}[t_3]-\cdots-y_{v_{k-1}}[t_{k-1}]-y_j[t_k]\to y_j[t_k+k'_j]\to \cdots \to y_j[t_k+sk'_j]\to u_j[t]=w_j[t]$ with $t>\max\{t_1+k_i,t_1,\ldots, t_k,\ldots ,t_k+sk'_j\}$, and for all $m\in \{1,2,\ldots ,k\}$ if $v_m\in Z$, $t>t_m+k_m$. 
	\item  Consider the case $i\in Z$ but $j\not \in Z$. Choose $t$ as $t_k$. As $i\in Z$, using condition C~\ref{c2}) $u_i[t_1+k_i]\leftarrow y_i[t_1]$ holds true. Choose $t$ sufficiently large so that $t>t_1+k_i$ also holds. Thus, we have constructed a trail in $G'_Z$ which is of the form: 
	$w_i[t_1+k_i]=u_i[t_1+k_i]\leftarrow y_i[t_1]-y_{v_{2}}[t_2]-y_{v_3}[t_3]-\cdots-y_{v_{k-1}}[t_{k-1}]- y_j[t]=w_j[t]$ with $t>\max\{t_1+k_i,t_1,\ldots, t_{k-1}\}$, and for all $m\in \{1,2,\ldots, k\}$ if $v_m\in Z$, $t>t_m+k_m$.
	\item Consider the case $i\not \in Z$ but $j \in Z$. Following arguments presented in case (A)  we conclude that the constructed trail of form $ w_i[t_1]=y_i[t_1]-y_{v_{2}}[t_2]-y_{v_3}[t_3]-\cdots-y_{v_{k-1}}[t_{k-1}]-y_j[t_k]\rightarrow u_j[t]=w_j[t]$, or of form $ w_i[t_1]=y_i[t_1]-y_{v_{2}}[t_2]-y_{v_3}[t_3]-\cdots-y_{v_{k-1}}[t_{k-1}]-y_j[t_k]\rightarrow y_j[t_k+k'_j]\to \cdots \to y_j[t_k+sk'_j]\to u_j[t]=w_j[t]$ exists in the perturbed DBN $G'_Z$ with $t>\max\{t_1,\ldots, t_k,\ldots ,t_k+sk'_j\}$, and for all $m\in \{1,2,\ldots ,k\}$ if $v_m\in Z$, $t>t_m+k_m$.
	\item  Consider the case $i\not \in Z$ and  $j \not  \in Z$. Following arguments presented in Case (B)  we conclude that the trail  $ w_i[t_1]=y_i[t_1]-y_{v_{2}}[t_2]-y_{v_3}[t_3]-\cdots-y_{v_{k-1}}[t_{k-1}]-y_j[t]=w_j[t]$ exists in the perturbed DBN $G'_Z$ with $t>\max\{t_1,\ldots, t_{k-1}\}$, and for all $m\in \{1,2,\ldots ,k\}$ if $v_m\in Z$, $t>t_m+k_m$.
	\end{enumerate}	
	\noindent We will now argue that in each of the cases above, the constructed trail is active given $\theta :=\{w_j^{(t-1)}, \mathcal{W}_{\bar{j}\bar{i}}^{(t-1)}\}.$
	
	{\it Sub-trails with colliders:} For all the trails in $G'_Z$ constructed under various cases above consider a sub-trail of the form $y_{v_{m-1}}[t_{m-1}]\rightarrow y_{v_{m}}[t_{m}] \leftarrow y_{v_{m+1}}[t_{m+1}] $. Clearly, $v_m$  cannot be either $i$ or $j.$ If $v_m\not\in Z$ then  as  $t_m<t$, we have   $y_{v_{m}}[t_{m}]\in 
        w_{\bar{j}\bar{i}}^{(t-1)}$ and thus the sub-trail is active. If  $v_m \in Z$ then the  corrupted version of $y_{v_{m}}[t_{m}]$ is $ u_{v_{m}}[t_{m}+k_{v_m}]=w_{v_m}[t_m+k_{v_m}]$ and as $t_m+k_{v_m}<t$, we have $w_{v_{m}}[t_{m}+k_m] \in  w_{\bar{j}\bar{i}}^{(t-1)}$. Thus the collider  $y_{v_{m}}[t_{m}]$ has a descendant $w_{v_m}[t_m+k_{v_m}] \in \theta $. Thus the sub-trail remains active. Thus no collider can deactivate the trails in $G'_Z.$ 
	
	{\it Sub-trails with with no colliders:} Now consider any node
        $y_{v_m}[t_m]$ which is not a collider. Note that in the
        trails for the cases (A), (B), (C), and (D), $y_j$ and $y_i$
        can only appear  as an intermediate node only if  they are
        corrupted. In such cases, neither $y_i[t_1]$ nor $y_j[t_k]$
        belong to $\theta .$ Thus, if $y_j$ or $y_i$ are intermediate nodes, they cannot deactivate the trails given $\theta .$ 
	Consider an intermediate node $v_m\not \in \{i,j\}.$ From
        Definition ~\ref{def:pg}P~\ref{case:notCollider}), $v_m$ is corrupted. Thus
        $y_{v_m}[t_m]\not =w_{v_m}[t_m] $ and  $y_{v_m}[t_m]$ cannot
        deactivate the trail as $y_{v_m}[t_m]\not \in \theta .$ 
        \hfill\qed
\section{Proof for Theorem \ref{thm:convergeConditionalI}}
\label{appx:conditional}
To prove the theorem, we require two results from \cite{jiao2013universal}. The following lemma shows that with sufficiently large data, the conditional probability assignment by CTW converges to the true probability assignment for a Markov process.
\begin{lemma}\label{lem:convergeQ}
Let $Q$ be the probability assignment described by CTW. Let $X$ be a stationary and finite alphabet Markov process with finite Markov order which is bounded by the prescribed tree depth of CTW algorithm. Let $P$ be the true probability for $x$. Then,
\begin{equation}
\lim _{n\to \infty}Q(x[n]\mid x^{(n-1)})-P(x[n]\mid x^{(n-1)})=0 \qquad \mbox{P-a.s}.
\end{equation}
\end{lemma}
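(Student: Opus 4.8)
The plan is to exploit the explicit mixture structure of the CTW assignment together with the ergodic theorem for the chain $X$. Since $X$ is Markov of some finite order $k\le D$ (the prescribed tree depth), its law $P$ is the tree source whose model $\mathcal{T}^{\star}$ is the complete depth-$k$ context tree, and this model is one of the candidates CTW weights over. Recall that CTW produces an assignment of the form $Q(x^{(n)})=\sum_{\mathcal{T}}w_{\mathcal{T}}\,P_{\mathcal{T}}(x^{(n)})$, the sum running over all pruned context trees $\mathcal{T}$ of depth at most $D$, with fixed positive weights $w_{\mathcal{T}}$ and $P_{\mathcal{T}}(x^{(n)})=\prod_{l}P_{e}(\text{counts at leaf }l)$ a product of Krichevsky--Trofimov (KT) estimators. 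Hence $Q(x[n]\mid x^{(n-1)})=\sum_{\mathcal{T}}\beta_{\mathcal{T}}(n)\,P_{\mathcal{T}}(x[n]\mid x^{(n-1)})$ is a convex combination of per-tree predictors with posterior weights $\beta_{\mathcal{T}}(n)\propto w_{\mathcal{T}}P_{\mathcal{T}}(x^{(n-1)})$. So it will suffice to show (i) $P_{\mathcal{T}}(x[n]\mid x^{(n-1)})-P(x[n]\mid x^{(n-1)})\to 0$ a.s.\ for every tree $\mathcal{T}$ refining $\mathcal{T}^{\star}$, and (ii) the total posterior weight on trees that do \emph{not} refine $\mathcal{T}^{\star}$ vanishes a.s.

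\textbf{Step (i).} For this I would use Birkhoff's ergodic theorem: passing if necessary to the recurrent communicating class of $X$ (which carries full probability by stationarity), $X$ is a finite-state ergodic chain, so every depth-$\le D$ context $s$ of positive stationary probability is visited infinitely often, its visit count $N_{n}(s)\to\infty$, and the empirical next-symbol frequencies $N_{n}(s,a)/N_{n}(s)\to P(x[n]{=}a\mid s)$ a.s. Since the KT estimate is $\bigl(N_{n}(s,a)+\tfrac12\bigr)/\bigl(N_{n}(s)+\tfrac{|\mathcal{X}|}{2}\bigr)$, it converges to $P(x[n]{=}a\mid s)$; and any $\mathcal{T}$ refining $\mathcal{T}^{\star}$ has each leaf inside a single state whose conditional law it inherits, which gives (i).

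\textbf{Step (ii).} Here I would compare coding rates. A tree $\mathcal{T}$ not refining $\mathcal{T}^{\star}$ has a leaf straddling two states with different transition laws, so by the ergodic theorem $-\tfrac1n\log P_{\mathcal{T}}(x^{(n)})\to H+\Delta_{\mathcal{T}}$ with $\Delta_{\mathcal{T}}>0$ a.s., while $-\tfrac1n\log P_{\mathcal{T}^{\star}}(x^{(n)})\to H$ a.s.; hence $P_{\mathcal{T}}(x^{(n-1)})/P_{\mathcal{T}^{\star}}(x^{(n-1)})\to 0$ and therefore $\beta_{\mathcal{T}}(n)\le (w_{\mathcal{T}}/w_{\mathcal{T}^{\star}})\,P_{\mathcal{T}}(x^{(n-1)})/P_{\mathcal{T}^{\star}}(x^{(n-1)})\to 0$ a.s. Summing over the finitely many such trees and combining with (i) (using $P_{\mathcal{T}}(x[n]\mid x^{(n-1)})\in[0,1]$) gives
\[
Q(x[n]\mid x^{(n-1)})=\sum_{\mathcal{T}\succeq\mathcal{T}^{\star}}\beta_{\mathcal{T}}(n)\,P_{\mathcal{T}}(x[n]\mid x^{(n-1)})+o(1)=P(x[n]\mid x^{(n-1)})+o(1)\ \text{a.s.}
\]

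\textbf{Expected obstacle.} The hard part will be Step (ii), the model-selection consistency of CTW: showing the mixture is not eventually pinned to an under-fitting tree. This needs both the ergodic theorem to pin down the exact asymptotic per-symbol coding cost of every candidate tree and the structural fact that CTW keeps the weight $w_{\mathcal{T}^{\star}}$ bounded below, which is what allows bounding $\beta_{\mathcal{T}}(n)$ by the likelihood ratio against $\mathcal{T}^{\star}$. Some care is also needed about contexts of zero stationary probability and the chain's transient initial behaviour (handled by restricting to the recurrent class) and about positivity of the true transition kernel on its support (in line with the positivity assumptions used elsewhere in the paper). A shorter but less self-contained alternative is simply to invoke the known strong pointwise consistency of the CTW predictor for matched tree sources from \cite{willems1995theCTW,jiao2013universal} and verify that the stated hypotheses place us in its scope.
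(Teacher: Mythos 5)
The paper does not actually prove this lemma: it is imported verbatim as a known result from \cite{jiao2013universal} (where it in turn rests on the CTW analysis of \cite{willems1995theCTW}), so there is no in-paper argument to compare against, and the fallback you mention in your last sentence --- simply invoking the known strong consistency of the CTW predictor for matched tree sources --- is exactly what the authors do. Your self-contained reconstruction is the standard route to that result and is sound in outline: the mixture representation of $Q$, convergence of the KT leaf estimators via the ergodic theorem, and the vanishing posterior weight of under-fitting trees via the strict gap in asymptotic per-symbol code length. Two small points deserve care. First, taking $\mathcal{T}^{\star}$ to be the \emph{complete} depth-$k$ tree makes your dichotomy ``refines / does not refine $\mathcal{T}^{\star}$'' slightly off: a coarser tree that merges contexts with \emph{identical} transition laws neither refines the complete tree nor incurs a positive divergence penalty, so its posterior weight need not vanish; the correct partition is into trees whose leaves are sufficient for the next-symbol law (for which the KT predictor still converges to the true conditional) versus insufficient trees (for which $\Delta_{\mathcal{T}}>0$). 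Your own wording in Step (ii) --- ``two states with \emph{different} transition laws'' --- already points at this fix. Second, the ``structural fact'' you flag as an obstacle, that $w_{\mathcal{T}^{\star}}$ stays bounded below, is automatic: the CTW prior weights $w_{\mathcal{T}}=2^{-\Gamma_D(\mathcal{T})}$ are fixed positive constants independent of $n$, so only the likelihood ratios matter. With those repairs your argument gives a more informative, self-contained proof than the paper's citation, at the cost of having to carefully handle transient contexts and the recurrent class, which the cited reference has already done.
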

Next, we will later use the following proposition which is a rephrased result from \cite{jiao2013universal}.
\begin{proposition}\label{prop:Hconverge}
Let $Q$ be the probability assignment in the CTW algorithm. Suppose, $X,Y$ are jointly stationary irreducible aperiodic finite-alphabet Markov processes whose order is bounded by the prescribed tree depth of the CTW algorithm. Let $\hat H(Y^{(n)}\parallel X^{(n)})=-\frac{1}{n}\sum _{i=1}^n\sum _{y_i}Q(y[i]\mid X^{(i-1)},Y^{(i-1)})\cdot  \log \frac{1}{Q(y[i]\mid X^{(i-1)},Y^{(i-1)})}$. Then,
\begin{equation}\label{eq:Hconverge}
\lim _{n\to \infty} \hat H(Y^{(n)}\parallel X^{(n)})-H_r(Y\parallel X)=0 \qquad \mbox{P-a.s},
\end{equation}
\end{proposition}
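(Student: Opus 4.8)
The target statement is Proposition \ref{prop:Hconverge}: that the CTW-based plug-in estimate $\hat H(Y^{(n)}\parallel X^{(n)})$ of the causally conditioned entropy rate converges almost surely to $H_r(Y\parallel X)$. This is attributed in the text as a ``rephrased result from \cite{jiao2013universal}'', so the job is to explain how to recover it from the ingredients already laid out, chiefly Lemma \ref{lem:convergeQ} (pointwise a.s.\ convergence of the CTW conditional assignment $Q(y[i]\mid X^{(i-1)},Y^{(i-1)})$ to the true conditional $P(y[i]\mid X^{(i-1)},Y^{(i-1)})$) together with the standard Breiman / Shannon--McMillan--Breiman machinery. First I would expand the estimator as the empirical Cesàro average
\[
\hat H(Y^{(n)}\parallel X^{(n)})=\frac{1}{n}\sum_{i=1}^n \phi_i,\qquad
\phi_i := \sum_{y}Q(y\mid X^{(i-1)},Y^{(i-1)})\log\frac{1}{Q(y\mid X^{(i-1)},Y^{(i-1)})},
\]
i.e.\ $\phi_i$ is the Shannon entropy of the CTW one-step predictive distribution at time $i$. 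I would also introduce the ``oracle'' companion $\psi_i := \sum_{y}P(y\mid X^{(i-1)},Y^{(i-1)})\log\frac{1}{P(y\mid X^{(i-1)},Y^{(i-1)})}$, built from the true conditional law, so that the proof splits into (i) $\frac1n\sum_{i}\psi_i \to H_r(Y\parallel X)$ and (ii) $\frac1n\sum_i(\phi_i-\psi_i)\to 0$ a.s.

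For step (i): because $(X,Y)$ is a jointly stationary irreducible aperiodic finite-alphabet Markov chain of order bounded by the CTW tree depth $D$, the true one-step predictor $P(y[i]\mid X^{(i-1)},Y^{(i-1)})$ depends only on the finite window $(X,Y)$ over the last $D$ steps, hence $\psi_i = \Psi\big((X,Y)_{i-D}^{i-1}\big)$ for a fixed bounded measurable function $\Psi$ of a finite block. Thus $\{\psi_i\}$ is itself a bounded stationary ergodic process, and Birkhoff's ergodic theorem gives $\frac1n\sum_{i=1}^n\psi_i \to \mean[\psi_1]$ a.s. It then remains to identify $\mean[\psi_1]$ with $H_r(Y\parallel X)$; this follows from the chain-rule expansion $H(Y^{(n)}\parallel X^{(n)})=\sum_{i=1}^n H(Y[i]\mid X^{(i)},Y^{(i-1)})$ (causally conditioned entropy), the fact that for this Markov model $H(Y[i]\mid X^{(i)},Y^{(i-1)})$ stabilizes for $i>D$ to the constant $\mean[\psi_1]$, and Cesàro convergence, so $H_r(Y\parallel X)=\lim_n\frac1n H(Y^{(n)}\parallel X^{(n)})=\mean[\psi_1]$. (If one prefers, $\mean[\psi_1]$ is exactly the conditional entropy of $Y[i]$ given the infinite past $X^{(i)},Y^{(i-1)}$, and one invokes the martingale-convergence / Lévy form of this identity.)

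For step (ii): the map $p\mapsto \sum_y p_y\log\frac{1}{p_y}$ on the probability simplex over a finite alphabet is uniformly continuous, so there is a modulus $\omega$ with $|\phi_i-\psi_i|\le \omega\big(\|Q(\cdot\mid X^{(i-1)},Y^{(i-1)})-P(\cdot\mid X^{(i-1)},Y^{(i-1)})\|_\infty\big)$, and $\omega(\varepsilon)\to 0$ as $\varepsilon\to 0$. Lemma \ref{lem:convergeQ}, applied coordinate-wise over the finite alphabet of $Y$, gives $\|Q(\cdot\mid X^{(i-1)},Y^{(i-1)})-P(\cdot\mid X^{(i-1)},Y^{(i-1)})\|_\infty \to 0$ a.s.\ as $i\to\infty$, hence $\phi_i-\psi_i\to 0$ a.s. A sequence converging to $0$ has Cesàro averages converging to $0$, so $\frac1n\sum_{i=1}^n(\phi_i-\psi_i)\to 0$ a.s. Adding the two steps yields $\hat H(Y^{(n)}\parallel X^{(n)})=\frac1n\sum_i\psi_i+\frac1n\sum_i(\phi_i-\psi_i)\to H_r(Y\parallel X)+0$, which is \eqref{eq:Hconverge}. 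One mild technical point worth stating explicitly: all $\phi_i,\psi_i$ are uniformly bounded by $\log|\mathcal Y|$, which legitimizes the dominated/Cesàro manipulations and the interchange of limits with the finite sum over $y$.

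\textbf{Main obstacle.} The delicate step is step (ii)'s use of Lemma \ref{lem:convergeQ}: that lemma asserts pointwise-in-$n$ a.s.\ convergence of a single CTW predictor, but here the same $i$-indexed quantities are being averaged, so one must make sure the a.s.\ convergence $\phi_i-\psi_i\to0$ holds on a single probability-one event (which it does, since Lemma \ref{lem:convergeQ} already gives an a.s.\ statement) and that no uniformity over the conditioning string is secretly needed — boundedness by $\log|\mathcal Y|$ plus the uniform continuity of the entropy functional is exactly what removes that need. The identification $\mean[\psi_1]=H_r(Y\parallel X)$ in step (i) is the other place to be careful, but it is routine given the finite Markov order assumption and the definition of causally conditioned entropy rate recalled in Section \ref{subsec:DIest}.
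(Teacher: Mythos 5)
Your proposal is correct in substance, but note that the paper never actually proves Proposition~\ref{prop:Hconverge}: it is stated as ``a rephrased result from \cite{jiao2013universal}'' and used as a black box inside the proof of Theorem~\ref{thm:convergeConditionalI}. What you have supplied is therefore a reconstruction of the cited result rather than a parallel to an in-paper argument. That said, your two-step decomposition --- an oracle term $\frac1n\sum_i\psi_i$ handled by Birkhoff's ergodic theorem plus stationarity and finite Markov order, and a plug-in error term $\frac1n\sum_i(\phi_i-\psi_i)$ handled by Lemma~\ref{lem:convergeQ}, uniform continuity of entropy on the finite simplex, boundedness by $\log|\mathcal Y|$, and Ces\`aro convergence --- is exactly the technique the paper itself deploys one level up, in its $F_n+S_n$ split of the term $T1$ in the proof of Theorem~\ref{thm:convergeConditionalI}. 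So your argument is the natural ``missing'' proof and is consistent with the authors' own methodology. Two small points worth flagging: (a) the paper's displayed formula for $\hat H(Y^{(n)}\parallel X^{(n)})$ carries both a leading minus sign and a $\log\frac{1}{Q}$, which as literally written yields a nonpositive quantity; you have silently adopted the intended reading $\phi_i=\sum_y Q\log\frac1Q\ge 0$, which is the right call but deserves a remark. (b) Your chain-rule identification of $\mean[\psi_1]$ with $H_r(Y\parallel X)$ uses the Kramer convention $H(Y[i]\mid X^{(i)},Y^{(i-1)})$ while the estimator conditions on $X^{(i-1)}$ only; this mismatch is inherited from the paper itself, but to be airtight you should fix one convention for $H_r(Y\parallel X)$ and carry it through both the definition and the oracle term.
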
   
Recall the expression for the conditional DI estimator from \eqref{eq:conditionalDI}:
\begin{multline}\label{eq:conditionalDI_appx}
\hat I(X^{(n)}\to Y^{(n)}\parallel Z^{(n)}) =\\ \frac{1}{n}\sum _{i=1}^n\sum _{y[i]}Q(y[i]\mid X^{(i-1)},Y^{(i-1)},Z^{(i-1)})\cdot \\  \log \frac{1}{Q(y[i]\mid Y^{(i-1)},Z^{(i-1)})}
\\  -\frac{1}{n}\sum _{i=1}^n\sum _{y[i]}Q(y[i]\mid X^{(i-1)},Y^{(i-1)},Z^{(i-1)})\cdot \\ \log \frac{1}{Q(y[i]\mid X^{(i-1)},Y^{(i-1)},Z^{(i-1)})}
\end{multline}
We will show that the first term(call it T1) in equation \eqref{eq:conditionalDI_appx} converges to $H_r(Y\parallel Z)$ and the second term (call it T2) in \eqref{eq:conditionalDI_appx} converges to $H_r(Y\parallel X,Z)$.

Convergence of T2: Let $V=\{X,Z\}$. Thus, T2 can be written as $\hat H(Y^{(n)}\parallel V^{(n)})=-\frac{1}{n}\sum _{i=1}^n\sum _{y[i]}Q(y[i]\mid V^{(i-1)},Y^{(i-1)})\cdot  \log \frac{1}{Q(y[i]\mid V^{(i-1)},Y^{(i-1)})}$. Using, proposition \ref{prop:Hconverge}, we thus have that $\lim _{n\to \infty} \hat H(Y^{(n)}\parallel V^{(n)}) \to H_r(Y\parallel V)$ almost surely.

Convergence of T1: Subtract $H_r(Y\parallel Z)$ from $T1$ and express $T1-H_r(Y\parallel Z)=F_n +S_n$ where,
\begin{multline}\label{eq:F_n}
F_n=\frac{1}{n}\sum _{i=1}^n\sum _{y[i]}P(y[i]\mid X^{(i-1)},Y^{(i-1)},Z^{(i-1)})\cdot \\  \log {P(y[i]\mid Y^{(i-1)},Z^{(i-1)})}
\\  -\frac{1}{n}\sum _{i=1}^n\sum _{y[i]}Q(y[i]\mid X^{(i-1)},Y^{(i-1)},Z^{(i-1)})\cdot \\ \log {Q(y[i]\mid Y^{(i-1)},Z^{(i-1)})},
\end{multline} 
\begin{multline}\label{eq:S_n}
S_n=-\frac{1}{n}\sum _{i=1}^n\sum _{y[i]}P(y[i]\mid X^{(i-1)},Y^{(i-1)},Z^{(i-1)})\cdot \\  \log {P(y[i]\mid Y^{(i-1)},Z^{(i-1)})} -H_r(Y\parallel Z)
\end{multline} 
By ergodicity, $S_n$ converges to zero almost surely. We need to show that $F_n$ converges to zero almost surely. Rewrite $F_n=\frac{1}{n}\sum _{i=1}^n\beta _i$ where, 
\begin{multline}
\beta _i= \sum _{y[i]}P(y[i]\mid X^{(i-1)},Y^{(i-1)},Z^{(i-1)})\cdot \\ \log {P(y[i]\mid Y^{(i-1)},Z^{(i-1)})}
\\  -\sum _{y[i]}Q(y[i]\mid X^{(i-1)},Y^{(i-1)},Z^{(i-1)})\cdot\\ \log {Q(y[i]\mid Y^{(i-1)},Z^{(i-1)})}
\end{multline}
By Lemma \ref{lem:convergeQ}, the CTW probabilities $Q(y[i]\mid X^{(i-1)},Y^{(i-1)},Z^{(i-1)})$ converges to true probabilities $P(y[i]\mid X^{(i-1)},Y^{(i-1)},Z^{(i-1)})$ almost surely. Therefore,
\begin{equation}
\lim _{i\to \infty}\beta _i=0 \qquad \mbox{P-a.s.}
\end{equation}
Hence, by Cesaro mean \cite{cover2012elements} we have:
\begin{equation}
\lim _{n\to \infty}F_n =\lim _{n\to \infty}\frac{1}{n}\beta _i=0 \quad\mbox{P-a.s.} \hfill\qed
\end{equation}
\bibliographystyle{IEEEtran}
\bibliography{ref}   

\begin{thebibliography}{10}
\providecommand{\url}[1]{#1}
\csname url@samestyle\endcsname
\providecommand{\newblock}{\relax}
\providecommand{\bibinfo}[2]{#2}
\providecommand{\BIBentrySTDinterwordspacing}{\spaceskip=0pt\relax}
\providecommand{\BIBentryALTinterwordstretchfactor}{4}
\providecommand{\BIBentryALTinterwordspacing}{\spaceskip=\fontdimen2\font plus
\BIBentryALTinterwordstretchfactor\fontdimen3\font minus
  \fontdimen4\font\relax}
\providecommand{\BIBforeignlanguage}[2]{{%
\expandafter\ifx\csname l@#1\endcsname\relax
\typeout{** WARNING: IEEEtran.bst: No hyphenation pattern has been}%
\typeout{** loaded for the language `#1'. Using the pattern for}%
\typeout{** the default language instead.}%
\else
\language=\csname l@#1\endcsname
\fi
#2}}
\providecommand{\BIBdecl}{\relax}
\BIBdecl

\bibitem{kretschmer2016climate}
M.~Kretschmer, D.~Coumou, J.~F. Donges, and J.~Runge, ``Using causal effect
  networks to analyze different arctic drivers of midlatitude winter
  circulation,'' \emph{Journal of Climate}, vol.~29, no.~11, pp. 4069--4081,
  2016.

\bibitem{sendrowski2018transfer}
A.~Sendrowski, K.~Sadid, E.~Meselhe, W.~Wagner, D.~Mohrig, and P.~Passalacqua,
  ``Transfer entropy as a tool for hydrodynamic model validation,''
  \emph{Entropy}, vol.~20, no.~1, p.~58, 2018.

\bibitem{omranian2016gene}
N.~Omranian, J.~M. Eloundou-Mbebi, B.~Mueller-Roeber, and Z.~Nikoloski, ``Gene
  regulatory network inference using fused lasso on multiple data sets,''
  \emph{Scientific reports}, vol.~6, p. 20533, 2016.

\bibitem{bassett2017network}
D.~S. Bassett and O.~Sporns, ``Network neuroscience,'' \emph{Nature
  neuroscience}, vol.~20, no.~3, p. 353, 2017.

\bibitem{fiedor2014networks}
P.~Fiedor, ``Networks in financial markets based on the mutual information
  rate,'' \emph{Phys. Rev. E}, vol.~89, p. 052801, May 2014.

\bibitem{borgatti2009network}
S.~P. Borgatti, A.~Mehra, D.~J. Brass, and G.~Labianca, ``Network analysis in
  the social sciences,'' \emph{science}, vol. 323, no. 5916, pp. 892--895,
  2009.

\bibitem{zhu2015green}
C.~Zhu, V.~C. Leung, L.~Shu, and E.~C.-H. Ngai, ``Green internet of things for
  smart world,'' \emph{IEEE Access}, vol.~3, pp. 2151--2162, 2015.

\bibitem{yang2016practical}
S.~Yang, U.~Adeel, Y.~Tahir, and J.~A. McCann, ``Practical opportunistic data
  collection in wireless sensor networks with mobile sinks,'' \emph{IEEE
  Transactions on Mobile Computing}, vol.~16, no.~5, pp. 1420--1433, 2016.

\bibitem{deka2018structure}
D.~Deka, S.~Backhaus, and M.~Chertkov, ``Structure learning in power
  distribution networks,'' \emph{IEEE Transactions on Control of Network
  Systems}, vol.~5, no.~3, pp. 1061--1074, Sept 2018.

\bibitem{finkle2018windowed}
J.~D. Finkle, J.~J. Wu, and N.~Bagheri, ``Windowed granger causal inference
  strategy improves discovery of gene regulatory networks,'' \emph{Proceedings
  of the National Academy of Sciences}, vol. 115, no.~9, 2018.

\bibitem{stankovic2018distributed}
M.~S. Stankovic, S.~S. Stankovic, and K.~H. Johansson, ``Distributed time
  synchronization for networks with random delays and measurement noise,''
  \emph{Automatica}, vol.~93, pp. 126 -- 137, 2018.

\bibitem{cho2014survey}
H.-H. Cho, C.-Y. Chen, T.~K. Shih, and H.-C. Chao, ``Survey on underwater
  delay/disruption tolerant wireless sensor network routing,'' \emph{IET
  Wireless Sensor Systems}, vol.~4, no.~3, pp. 112--121, 2014.

\bibitem{leong17sensor}
A.~S. Leong, S.~Dey, and D.~E. Quevedo, ``Sensor scheduling in variance based
  event triggered estimation with packet drops,'' \emph{IEEE Transactions on
  Automatic Control}, vol.~62, no.~4, pp. 1880--1895, 2017.

\bibitem{weerts2018identifiability}
H.~H. Weerts, P.~M.~V. den Hof, and A.~G. Dankers, ``Identifiability of linear
  dynamic networks,'' \emph{Automatica}, vol.~89, pp. 247 -- 258, 2018.

\bibitem{hendrickx2018id}
J.~M. Hendrickx, M.~Gevers, and A.~S. Bazanella, ``Identifiability of dynamical
  networks with partial node measurements,'' \emph{IEEE Transactions on
  Automatic Control}, 2018.

\bibitem{materassi2019signal}
D.~Materassi and M.~V. Salapaka, ``Signal selection for estimation and
  identification in networks of dynamic systems: a graphical model approach,''
  \emph{arXiv preprint arXiv:1905.12132}, 2019.

\bibitem{MatSal12}
------, ``On the problem of reconstructing an unknown topology via locality
  properties of the wiener filter,'' \emph{IEEE transactions on automatic
  control}, vol.~57, no.~7, pp. 1765--1777, 2012.

\bibitem{quinn15DIG}
C.~J. Quinn, N.~Kiyavash, and T.~P. Coleman, ``{Directed Information Graphs},''
  \emph{IEEE Transactions on Information Theory}, vol.~61, no.~12, pp.
  6887--6909, 2015.

\bibitem{sinha2017identifying}
S.~Sinha, P.~Sharma, U.~Vaidya, and V.~Ajjarapu, ``Identifying causal
  interaction in power system: Information-based approach,'' in \emph{2017 IEEE
  56th Annual Conference on Decision and Control (CDC)}, 2017, pp. 2041--2046.

\bibitem{sinha2019information}
------, ``On information transfer based characterization of power system
  stability,'' \emph{IEEE Transactions on Power Systems}, 2019.

\bibitem{yuan2011robust}
Y.~Yuan, G.~B. Stan, S.~Warnick, and J.~Goncalves, ``Robust dynamical network
  structure reconstruction,'' \emph{Automatica}, vol.~47, no.~6, pp. 1230 --
  1235, 2011, special Issue on Systems Biology.

\bibitem{chetty2013robust}
V.~Chetty, D.~Hayden, J.~Goncalves, and S.~Warnick, ``Robust signal-structure
  reconstruction,'' in \emph{52nd IEEE Conference on Decision and Control}, Dec
  2013, pp. 3184--3189.

\bibitem{goncalves2008necessary}
J.~Goncalves and S.~Warnick, ``Necessary and sufficient conditions for
  dynamical structure reconstruction of lti networks,'' \emph{IEEE Transactions
  on Automatic Control}, vol.~53, no.~7, pp. 1670--1674, Aug 2008.

\bibitem{runge2018causal}
J.~Runge, ``Causal network reconstruction from time series: From theoretical
  assumptions to practical estimation,'' \emph{Chaos: An Interdisciplinary
  Journal of Nonlinear Science}, vol.~28, no.~7, p. 075310, 2018.

\bibitem{SLS17network}
V.~R. Subramanian, A.~Lamperski, and M.~V. Salapaka, ``Network topology
  identification from corrupt data streams,'' in \emph{IEEE 56th Annual
  Conference on Decision and Control (CDC)}, 2017, pp. 1695--1700.

\bibitem{SLS18inferring}
------, ``Inferring directed graphs for networks from corrupt data streams (in
  progress),'' in \emph{IEEE 57th Annual Conference on Decision and Control
  (CDC)}, 2018.

\bibitem{koller2009prob}
D.~Koller and N.~Friedman, \emph{Probabilistic Graphical Models: Principles and
  Techniques}.\hskip 1em plus 0.5em minus 0.4em\relax The MIT Press, 2009.

\bibitem{jiao2013universal}
J.~Jiao, H.~H. Permuter, L.~Zhao, Y.~Kim, and T.~Weissman, ``Universal
  estimation of directed information,'' \emph{IEEE Transactions on Information
  Theory}, vol.~59, no.~10, pp. 6220--6242, 2013.

\bibitem{willems1995theCTW}
F.~M. Willems, Y.~M. Shtarkov, and T.~J. Tjalkens, ``The context-tree weighting
  method: basic properties,'' \emph{IEEE Transactions on Information Theory},
  vol.~41, no.~3, pp. 653--664, 1995.

\bibitem{kramer1998thesis}
G.~Kramer, \emph{Directed information for channels with feedback}.\hskip 1em
  plus 0.5em minus 0.4em\relax Hartung-Gorre, 1998.

\bibitem{cover2012elements}
T.~M. Cover and J.~A. Thomas, \emph{Elements of information theory}.\hskip 1em
  plus 0.5em minus 0.4em\relax John Wiley \& Sons, 2012.

\end{thebibliography}
\begin{IEEEbiography} [{\includegraphics[width=1in,height=1.25in,clip,keepaspectratio]{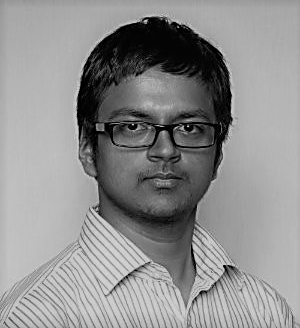}}]{Venkat Ram Subramanian}
received the B.Tech degree in electrical engineering from SRM University, Chennai, India, in 2014, and the M.S. degree in electrical engineering from the University of Minnesota, Minneapolis, in 2016. 
Currently, he is working towards a Ph.D. degree at the University of Minnesota. His Ph.D. research is on learning dynamic relations in networks from corrupt data-streams. In addition to system identification and stochastic systems, his research interests also include grid modernization and optimal energy management in Distributed Energy Resources (DER).  
\end{IEEEbiography}
\begin{IEEEbiography}[{\includegraphics[width=1in,height=1.25in,clip,keepaspectratio]{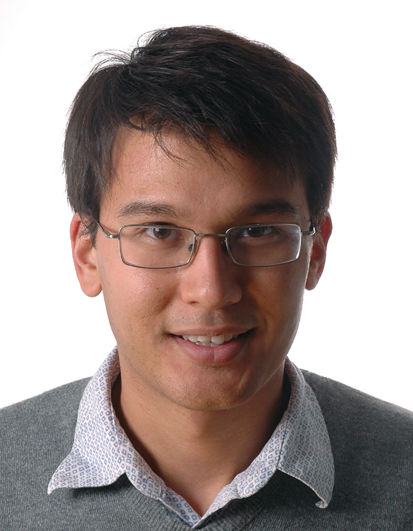}}] {Andrew Lamperski}
(S'05--M'11) received the B.S. degree in biomedical engineering and
mathematics in 2004 from the Johns Hopkins University, Baltimore, MD,
and the Ph.D. degree in control and dynamical systems in 2011 from the
California Institute of Technology, Pasadena. He held postdoctoral
positions in control and dynamical systems at the California Institute
of Technology from 2011--2012 and in mechanical engineering at The
Johns Hopkins University in 2012. From 2012--2014,
did
postdoctoral work in the Department of Engineering, University of
Cambridge, on a scholarship from the Whitaker International
Program. In 2014, he joined the Department of Electrical and Computer
Engineering, University of Minnesota as an Assistant Professor. His
research interests include optimal control, optimization, and identification, with applications to neuroscience and robotics.
\end{IEEEbiography}
\begin{IEEEbiography}[{\includegraphics[width=1in,height=1.25in,clip,keepaspectratio]{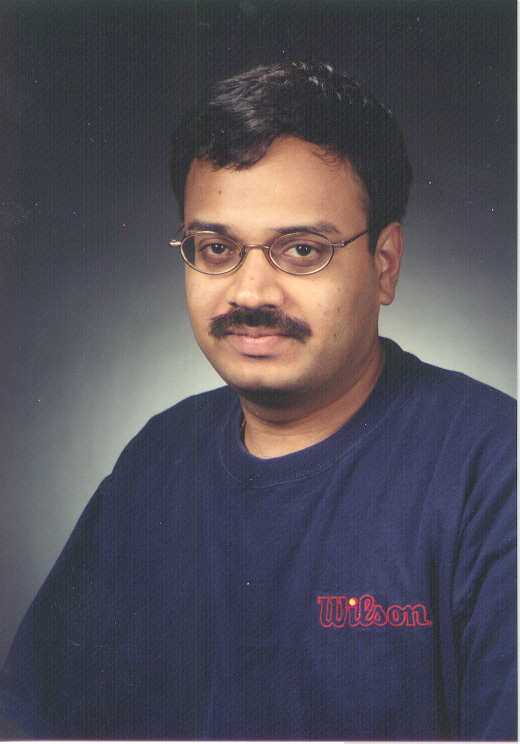}}] {Murti Salapaka} (SM'01--F'19)
  Murti Salapaka received the bachelor’s degree from the Indian 
Institute of Technology, Madras, India, in 1991, and the Master’s and 
Ph.D. degrees from the University of California, Santa Barbara, CA, USA, 
in 1993 and 1997, respectively, all in mechanical engineering. He was 
with Electrical Engineering department, Iowa State University, from 1997 
to 2007. He is currently the Vincentine Hermes-Luh Chair Professor with 
the Electrical and Computer Engineering Department, University of 
Minnesota, Minneapolis, MN, USA. Prof. Salapaka was the recipient of the 
NSF CAREER Award and the ISU—Young Engineering Faculty Research Award 
for the years 1998 and 2001, respectively. He is an IEEE Fellow.
\end{IEEEbiography}
\end{document}